\newtheorem{theorem}{Theorem}[section]
\newtheorem{lemma}[theorem]{Lemma}
\newtheorem{corollary}[theorem]{Corollary}
\newtheorem{definition}{Definition}[section]
\newlength{\figtxtwid}
\long\gdef\boxitnew#1{\dimen200 = \hsize \advance\dimen200 by -7pt
\begingroup\vbox{\hrule \hbox to \hsize{\vrule\kern3pt
      \vbox{\hsize \dimen200 \kern3pt#1\kern3pt}\hfil \kern3pt\vrule}\hrule}\endgroup}
\newlength{\myfigwidth}
\newcommand{\boxtext}[1]{\setlength{\myfigwidth}{\columnwidth}\addtolength{\myfigwidth}{-1ex}\boxitnew{\vspace*{.1ex}\begin{minipage}{\myfigwidth}\setlength{\parindent}{0pt}#1\end{minipage}\\[.1ex]}}
\renewcommand{\boxtext}[1]{\setlength{\parindent}{0pt}#1\\[.1ex]}
\renewcommand{\boxtext}[1]{\begin{minipage}{\columnwidth}\setlength{\parindent}{0pt}#1\end{minipage}}
\renewcommand{\boxtext}[1]{\boxitnew{\setlength{\parindent}{0pt}\vspace{-0.2cm} #1 \vspace{-0.2cm}}}
\newcommand{\polylog}{\textrm{ polylog }}
\newcommand{\ceil}[1]{\left\lceil #1 \right\rceil}
\newcommand{\IncludePictures}[1]{#1}
\renewcommand{\IncludePictures}[1]{}
\newcommand{\fullOnly}[1]{}
\begin{document}

\title{Simple, Fast and Deterministic Gossip and Rumor Spreading\footnote{The conference version of this paper was published in the ACM-SIAM Symposium for Discrete Algorithms~\cite{haeupler2013simple} where it won a best student paper award.}}
\author{Bernhard Haeupler\\ Microsoft Research \\ \texttt{haeupler@cs.cmu.edu}}
\date{}

\maketitle

\begin{abstract}
We study gossip algorithms for the rumor spreading problem, which asks each node to deliver a rumor to all nodes in an unknown network. Gossip algorithms allow nodes only to call one neighbor per round and have recently attracted attention as message efficient, simple, and robust solutions to the rumor spreading problem.

\smallskip

A long series of papers analyzed the performance of uniform random gossip in which nodes repeatedly call a random neighbor to exchange all rumors with. A main result of this line of work was that uniform gossip completes in $O(\frac{\log n}{\Phi})$ rounds where $\Phi$ is the conductance of the network. More recently, non-uniform random gossip schemes were devised to allow efficient rumor spreading in networks with bottlenecks. 
In particular, [Censor-Hillel et al., STOC'12] gave an $O(\log^3 n)$ algorithm to solve the $1$-local broadcast problem in which each node wants to exchange rumors locally with its $1$-neighborhood. By repeatedly applying this protocol, one can solve the global rumor spreading quickly for all networks with small diameter, independently of the conductance.

\smallskip

All these algorithms are inherently randomized in their design and analysis. A parallel research direction has been to reduce and determine the amount of randomness needed for efficient rumor spreading. This has been done via lower bounds for restricted models and by designing gossip algorithms with a reduced need for randomness, e.g., by using pseudorandom generators with short random seeds. The general intuition and consensus of these results has been that randomization plays a important role in effectively spreading rumors and that at least a polylogarithmic number of random bit are crucially needed.

\smallskip

In this paper we improves over this state of the art in several ways by presenting a \emph{deterministic} gossip algorithm that solves the the {$k$-local} broadcast problem in $2(k+\log n) \log n$ rounds\footnote{Throughout this paper $\log x$ denotes $\ceil{\log_2 x}$, that is, the rounded up binary logarithm.}. Besides being the first efficient deterministic solution to the rumor spreading problem this algorithm is interesting in many aspects: It is simpler, more natural, more robust, and faster than its randomized pendant and guarantees success with certainty instead of with high probability. Its analysis is furthermore simple, self-contained, and fundamentally different from prior works.

\end{abstract}

\newpage

\section{Introduction}

Broadcasting, that is, disseminating information present initially at different nodes in an unknown network to every node, is a fundamental network communication primitive with many applications.
It has been studied under different names such as gossip, rumor spreading, information dissemination, (all-to-all) multicast, and (global) broadcast.
\smallskip

\emph{Gossip algorithms}, during which nodes contact or call only one neighbor at a time, have been proposed as a powerful time and message efficient alternatives to flooding, that is, repeatedly forwarding information to all neighbors, or structured broadcast protocols, which often require a stable network with known topology. 

\smallskip

The simplest and most widely studied form of gossip is \emph{uniform (random) gossip} in which nodes repeatedly call a random neighbor to exchange information. A series of results showed that this algorithm performs well on well-connected graphs with no bottleneck(s)~\cite{frieze1985shortest,pittel1987spreading,uniform1,uniform2,uniformfinal,soda2012_vertex-expansion,vertexexpansion2}. More precisely, the main result is a tight bound of $O(\frac{\log n}{\Phi})$ rounds, where $\Phi$ is the conductance of the network. More recently, non-uniform random gossip schemes were devised to allow efficient rumor spreading in arbitrary networks~\cite{weakcond,random}. The \emph{local broadcast} problem, that asks each node to exchange rumors locally with all its neighbors, has been a crucial abstraction to obtain results independent from any conductance-type measure. In particular, building on the results on uniform gossip it was shown in \cite{random} how to solve the local broadcast problem in $O(\log^3 n)$ rounds. Repeatedly applying this solution leads to an $O(D \log^3 n)$ global broadcast in any network with diameter $D$ and thus to a polylogarithmic time gossip solution for any network with polylogarithmic diameter. Using connections to spanners one can furthermore get a $O(D + \log^{O(1)})$ solution.
\smallskip

All these algorithms are inherently randomized in both their design and analysis in that they crucially rely on the effect that choosing neighbors randomly for forwarding disperses information quickly in expanders or expanding subgraphs. A parallel research direction to finding faster and more general gossip algorithms has been to study the necessity for this randomization. In particular, there are both lower bounds quantifying how much randomness is inherently needed for efficient gossip algorithms \cite{quasirandom-randomnesstradeoff,lowerboundsrand,KarpSSV2000} and newly designed gossip protocols that work with reduced amounts of randomness \cite{DoerrFS08,DoerrFS09,FountoulakisH2009,lowrandomness}. The general intuition and also the consensus of these works has been that (some) randomization plays a crucial role in efficiently spreading rumors.

\subsection{Our Results}

This paper contributes to both of these research directions by presenting a fast, simple, natural, robust and deterministic gossip algorithm for the local broadcast problem:

\begin{theorem}\label{thm:main}
For any $k$ there is a simple deterministic gossip algorithm that runs for $2(k \log n + \log^2 n)$ rounds on any $n$-node network and solves the {$k$-local} broadcast problem, that is, allows each node to exchange a rumor with each node at distance at most $k$.
\end{theorem}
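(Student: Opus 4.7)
The plan is to exhibit a deterministic schedule driven purely by the $\log n$-bit identifiers of the nodes and to analyze it by tracking how far each rumor has travelled along every shortest path.

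\textbf{Algorithm.} I would assign each node a distinct $\log n$-bit identifier and split the $2(k\log n + \log^2 n)$ rounds into $k + \log n$ \emph{super-rounds} of length $2\log n$ each. A super-round sweeps over the $\log n$ bit positions: for each bit index $i$ there are two dedicated rounds, in one of which nodes whose $i$-th bit equals $0$ are ``active'' and in the other those whose $i$-th bit equals $1$. During its active round a node uses its single call to contact one opposite-bit neighbor, chosen by a deterministic rule parametrised by the node's identifier, the bit position $i$, and the super-round index. The rule is designed so that across super-rounds the active node rotates systematically through its opposite-bit neighbors; because a call is bidirectional in the gossip model, it suffices that every relevant edge be activated often enough.

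\textbf{Analysis via a potential on shortest paths.} Fix a source $u$, a target $v$ with $\mathrm{dist}(u,v) = d \le k$, and a shortest path $u = w_0, w_1, \ldots, w_d = v$. Let $\Phi_s$ denote the largest index $j$ such that $w_j$ holds $u$'s rumor at the end of super-round $s$, so $\Phi_0 = 0$. The goal is to prove $\Phi_{k+\log n} \ge d$, which delivers $u$'s rumor to $v$ within the prescribed number of rounds.

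\textbf{Key progress lemma.} The heart of the argument is to show that at most $\log n$ of the $k + \log n$ super-rounds can fail to advance $\Phi$ by at least one. The identifiers of the two adjacent path-nodes $w_{\Phi_s}$ and $w_{\Phi_s+1}$ differ in at least one bit, so some dedicated round of the super-round puts one of them in active status with respect to a separating bit, giving an opportunity for the call that would advance $\Phi$. A super-round fails only if, for every separating bit, the active endpoint spends its single call on a \emph{different} opposite-bit neighbor. A charging argument against the rotation rule, combined with the fact that there are only $\log n$ candidate separating bits per edge, bounds the number of such obstructed super-rounds by $\log n$. Summing over the $k + \log n$ super-rounds then yields $\Phi_{k+\log n} \ge k \ge d$, as required.

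\textbf{Main obstacle.} The technical difficulty is the progress lemma, specifically the $\log n$-bound on obstructed super-rounds. A naive analysis only guarantees that \emph{some} bit separates the next two path-nodes, but does not control which neighbor the active endpoint actually calls. Making the bound rigorous will require carefully designing the deterministic rotation rule so that, per separating bit, the active endpoint cannot be distracted away from the path edge for more than one super-round, after which it is forced to call the path successor. Once this is in hand, the remainder is a routine summation turning the per-path potential argument into the global guarantee for the $k$-local broadcast problem.
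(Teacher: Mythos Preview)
Your approach is fundamentally different from the paper's, and the gap you yourself flag as the ``main obstacle'' is fatal rather than a technicality to be filled in.

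The paper's algorithm is \emph{adaptive}: in each of $\log n$ iterations every still-active node links to an arbitrary neighbor \emph{whose rumor it has not yet received}, and then rumors are propagated along all links established so far (by flooding in Algorithm~3, or by a PUSH--PULL tree broadcast in Algorithm~4). The analysis (Lemma~4.2) shows that any node still missing a neighbor's rumor after iteration $i$ is the root of a binomial $i$-tree on $2^i$ vertices in the graph of established links, vertex-disjoint from a similar tree rooted at that neighbor; since the graph has only $n$ vertices this forces termination after $\log n$ iterations. The $k$-local bound then comes from reusing the established links: after one full run costing $2\log n(\log n+1)$ rounds, each further $1$-local broadcast is just one more propagation phase of $2\log n$ rounds along the same links.

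Your scheme, by contrast, is \emph{oblivious}: which neighbor a node calls is a fixed function of identifiers and round number, independent of what rumors have arrived. The progress lemma you need---that at most $\log n$ super-rounds can fail to advance along any fixed shortest path---cannot hold for such a schedule. Take a path edge $\{w_j,w_{j+1}\}$ whose endpoints differ in a single bit $i$ (arrange identifiers along the path by a Gray code), and attach to each of $w_j$ and $w_{j+1}$ another $\Theta(\sqrt{n})$ neighbors that also differ from it in bit $i$. In the two rounds of a super-round dedicated to bit $i$, each of $w_j,w_{j+1}$ makes exactly one call to some opposite-bit-$i$ neighbor; any rotation rule must cycle through $\Theta(\sqrt{n})$ candidates before it hits the path edge, so $\Theta(\sqrt{n})$ super-rounds can be obstructed at this single edge. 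No charging argument bounded by the $\log n$ bit positions can save this, because the obstruction is driven by degree, not by bits.

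The idea you are missing is precisely the adaptivity. By always linking to a neighbor one has \emph{not yet heard from}, and by propagating so that the trees of contacted nodes are forced to be pairwise disjoint, the paper's argument bypasses degree bottlenecks entirely---and does so regardless of how the new neighbor is chosen, which is why the algorithm is deterministic. An identifier-driven fixed schedule has no mechanism that reacts to received information and therefore cannot avoid the high-degree trap above.
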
 

The next corollary shows that this directly implies a fast and simple algorithm for the global broadcast problem as well. We remark that using the connection to spanners from \cite{random} one can also obtain a (significantly less simple) deterministic gossip algorithm solving the global broadcast problem in $O(D + \log^{O(1)} n)$ rounds, given that there exists a deterministic variant of the spanner construction from \cite{Pettie2009}. 

\begin{corollary}\label{cor:main1}
There is a simple deterministic gossip algorithm that runs for $2(D \log n + \log^2 n)$ rounds on any $n$-node network with diameter $D$ and solves the global broadcast problem.
\end{corollary}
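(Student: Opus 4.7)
The plan is to derive the corollary as an immediate instantiation of Theorem~\ref{thm:main} with the parameter $k$ set to the diameter $D$. By definition, in a network of diameter $D$, every pair of nodes is at graph distance at most $D$. Hence the $D$-local broadcast problem—in which every node exchanges its rumor with every node at distance at most $D$—coincides exactly with the global broadcast problem on any network of diameter at most $D$. Plugging $k = D$ into the round complexity $2(k \log n + \log^2 n)$ from Theorem~\ref{thm:main} yields the stated bound of $2(D \log n + \log^2 n)$ rounds, and since the guarantee of Theorem~\ref{thm:main} is deterministic and holds for every $n$-node network, the same holds for the derived global broadcast algorithm.

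The only subtlety worth remarking on is the dependence of the algorithm on the parameter $k$: Theorem~\ref{thm:main} provides, for each fixed $k$, an algorithm running for $2(k \log n + \log^2 n)$ rounds, so the corollary implicitly assumes that nodes know (an upper bound on) the diameter $D$. If $D$ is not known a priori, one can either substitute any known upper bound (e.g.\ $D \le n-1$) at the cost of a larger running time, or use a standard doubling strategy—running the algorithm of Theorem~\ref{thm:main} with values $k = 1, 2, 4, 8, \dots$ in sequence until termination is detected—which loses at most a constant factor compared with the bound $2(D \log n + \log^2 n)$. Since the statement of the corollary is phrased in terms of $D$, no obstacle arises and the one-line instantiation above suffices.
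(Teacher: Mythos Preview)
Your proposal is correct and matches the paper's approach: the paper does not spell out a proof of \Cref{cor:main1} at all but simply presents it as the immediate instantiation of \Cref{thm:main} with $k=D$, relying on the observation (made explicitly in \Cref{sec:problemdef}) that the global broadcast problem is equivalent to the $k$-local broadcast problem for any $k\ge D$. Your additional remark about unknown $D$ is a reasonable aside and consistent with the paper's own guess-and-double discussion for unknown $n$.
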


Certainly, the most striking aspect of \Cref{thm:main} is that it constitutes the first deterministic gossip algorithm for the broadcast setting studied here and in~\cite{weakcond,random}. We feel that this is an  interesting and surprising result given the general intuition and the results from~\cite{quasirandom-randomnesstradeoff,lowerboundsrand,KarpSSV2000,lowrandomness} that at least some randomness is needed to enable an efficient gossip algorithm.  We emphasize that our result does not stem from unusual model assumptions. Exactly as in, e.g., \cite{weakcond,random} we only assume that (1) each node only knows the IDs of its neighbors and (2) two nodes involved in a call can exchange all rumors known to them. 

Our algorithm has several other advantages. For one it is faster than previous randomized algorithms: For $k=1$ our algorithm is logarithmically faster than \cite{random} while for $k=\log n$ its $4\log^2 n$ running time is a $\Theta(\log^2 n)$-factor and therefore quadratically faster. Even when compared to the performance of uniform gossip it is at most a $\log n$-factor slower on expanders while continuing to be near optimal on any other topology. Furthermore, any faster (randomized) algorithm with running time $O(\log^{2-\epsilon} n)$ for $k=\log n$ would imply the existence of spanners with better than known quality (see Conclusion).	Lastly, its deterministic nature brings with it the advantage that these running time guarantees and the algorithm's correctness hold with certainty instead of with high probability.
	
Our algorithm is also simpler, more natural and more robust. In fact, one of our algorithms is a simplification of \cite{random} in which we allow all randomized choices to be replaced by arbitrary ones. While our analysis is powerful enough to work in this more general setting it remains simple, short, and self-contained. It also extends nicely to analyzing a wide variety of related natural processes demonstrating the robustness of both the algorithm and its analysis.

\subsection{Related Work}

Gossip and rumor spreading have been intensely studied both for the setting of a single rumor being spread and for a rumor being spread from each node in the network. A difference between the two settings becomes mostly apparent when the amount of information exchanged between nodes in a round is limited. In such a scenario very different techniques like algebraic gossip~\cite{HaeuplerNC} become interesting. In this paper we assume large packet sizes that allow two nodes to exchange potentially all rumors in one packet. In this setting it typically does not matter whether one or more rumors are to be spread. 
\smallskip 


The spreading of a rumor according to the uniform gossip process was first considered by Frieze and Gimmet~\cite{frieze1985shortest} and subsequently Pittel~\cite{pittel1987spreading} which proved that only $(1 + \ln 2) \log n + O(1)$ rounds are needed on the complete graph. Lower bounds and non-uniform algorithms for the complete graph were investigated by~\cite{KarpSSV2000}. Going to more general topologies~\cite{uniform1} showed that uniform gossip works well in any expander. More precisely, an $O(\frac{\log^4 n}{\Phi^6})$ spreading time was proven for any graph with conductance $\Phi$. This bound was then improved to $O(\frac{\log^2 \Phi^{-1} \log n}{\Phi})$~\cite{uniform2} and finally $O(\frac{\log n}{\Phi})$~\cite{uniformfinal}. This bound is tight since for any $\Phi = \Omega(1/n)$ there are graphs with conductance $\Phi$ but diameter $\Theta(\frac{\log n}{\Phi})$ on which uniform gossip requires $\Theta(\frac{\log n}{\Phi})$ rounds to complete. Similar results were obtained for the spreading time of uniform gossip on networks with vertex expansion $\alpha$. In particular, \cite{soda2012_vertex-expansion,vertexexpansion2} proved an $O(\frac{\log^2 n}{\alpha})$ bound and showed it to be tight as well. These results demonstrate very nicely that uniform gossip performs at least as good as the worst bottleneck that can be found in the network. On the other hand it is also relatively easy to come up with networks that have a small diameter but a bad bottleneck, in form of a bad cut, on which uniform gossip fails badly due to not being able to bridge this bottleneck. All in all, this gives a very tight understanding of what can and cannot be achieved with uniform gossip.

\smallskip 

Several papers have given improvements over the uniform gossip protocol that cope better with bottlenecks. In~\cite{weakcond} a rumor spreading algorithm was provided which works well for any graph with good weak expansion. This includes some graphs with (few) bottlenecks. The first algorithm to work efficiently on any topology was~\cite{random}. As already summarized in the introduction this paper gives a $O(\log^3 n)$ algorithm for the local broadcast problem as a crucial step to achieve a polylogarithmic running time on any network with polylogarithmic diameter. For all algorithms up to then there existed networks with very small diameter, e.g., $O(\log n)$, on which gossip took $\Omega(n)$ time.  

\smallskip 

The other research direction that is relevant for this paper aimed at determining and reducing the amount of randomness required by efficient gossip algorithms. A very interesting and successful way to reduce the amount of randomness is the quasirandom rumor spreading process of~\cite{DoerrFS08}. This protocol assumes an arbitrary cyclic ordering of neighbors at each nodes but randomizes it by picking a random random starting point. It was shown that this simple algorithm achieves a similar performance as the fully random uniform gossip for many topologies like the complete graph~\cite{FountoulakisH2009,quasirandom-tight}, random graphs, 
hypercubes or expanders~\cite{DoerrFS09}. The quasirandom rumor spreading process only requires $O(\log n)$ bits of randomness per node to pick the starting position. In~\cite{quasirandom-randomnesstradeoff} it was shown that one cannot further reduce this amount without a severe loss of efficiency. In particular, if one uses $o(\log n)$ random bits to choose uniformly at random between a subset of equidistant starting points then the number of rounds becomes almost linear instead of logarithmic~\cite{quasirandom-randomnesstradeoff}. The question how much randomness suffices for gossip to be efficient was also addressed in~\cite{lowerboundsrand}. This paper presents an algorithm that uses a total of $n \log \log n$ bits of randomness, gives a non-constructive argument for the existence of a gossip algorithm with roughly $2\log n$ bits of randomness and shows that no algorithm from a natural class of gossip algorithms can use less than logarithmic amount of randomness without taking roughly linear time. Algorithms that use pseudorandom generators or hashing schemes require only one random seed of polylogarithmic length~\cite{lowrandomness} and thus almost achieve this low total amount of randomness.

\smallskip

\paragraph{Organization}

The rest of the paper is organized as follows: In \Cref{sec:modelandproblem} we first formally define the network model and the (local) broadcast problem. 
In \Cref{sec:rand} we give a randomized local broadcast algorithm that simplifies the algorithm from \cite{random}.
In \Cref{sec:detsimple} we further simplify this algorithm and give a novel and more powerful analysis, which demonstrates
that this algorithm remains correct and efficient even when all random choices are replaced by arbitrary deterministic
choices. In \Cref{sec:more} we then show how to achieve further speed ups and obtain \Cref{thm:main}.
Lastly, in \Cref{sec:Natural Gossip Processes and Robustness} we explain how our algorithm can be interpreted as a very natural process with many
desirable robustness properties.

\section{Gossip Algorithms and the Local Broadcast Problem}\label{sec:modelandproblem}

In this section we define gossip protocols, the class of communication algorithms we are interested in, and the global and local broadcast problems.

\subsection{Gossip Algorithms}\label{sec:model}

We study gossip protocols, 
 that is, synchronous communication algorithms in which in each round each node calls (at most) one neighbor for a bidirectional information exchange. This type of algorithm fits the uniform PUSH-PULL gossip protocols that have been widely studied and matches algorithms allowed in the GOSSIP model that was defined in~\cite{random} as a restriction of the standard LOCAL model for distributed computing. The setting is given as follows:

\medskip 
{\noindent \bfseries Network:\ }

A network is specified by an undirected graph $G=(V,E)$ with node set $V$ and edge set $E$. We denote the number of nodes with $n = |V|$, the number or edges with $m = |E|$ and the diameter of $G$ with $D$. For every node $v \in V$ we define the neighborhood $\Gamma_G(v)$ to be the set of nodes whose distance to $v$ is at most one (including itself). Similarly, we define $\Gamma_G^k(v)$ to be all nodes of distance at most $k$ from $v$ and call this the $k$-neighborhood. We omit the subscript if the graph is clear from the context. 

\medskip 
{\noindent \bfseries Communication:\ }

Nodes communicate in synchronous rounds $t \in \{0,1,\ldots\}$. In each round $t$ each node chooses a message and an incident edge. We denote the union of the selected edges in a round $t$ with $E_t$ and the resulting graph with $G_t = (V,E_t)$. 
 After this selection process for any node $v$ the message of $v$ is delivered to all nodes $w \in \Gamma_{G_t}(v)$, that is, all nodes that contacted $v$ or were contacted by $v$. In short, we allow every nodes to initialize one call or bidirectional message exchange per round.

We assume that the cost of communication is completely covered by the caller, i.e., the node that initializes a contact. This leads to merely a constant cost for communication per node and round. 

Similar to the LOCAL model the GOSSIP model of~\cite{random} does in principle not limit the complexity of local computations or the size of the messages. This allows a node to always exchange all rumors it knows. Indeed, all messages sent by a node during our protocols will simply consist the collection of rumors it knows. For any node $v$ we denote this collection with $R_v$.

\medskip 
{\noindent \bfseries Initial Knowledge:\ }

We assume that each node has a unique identifier (UID). Each node initially solely knows its UID and the UIDs of its neighbors. It is important that the network topology $G$ is unknown to the nodes and we assume that no further knowledge about the network is known to the nodes.

\medskip 
{\noindent \bfseries Important Remarks:\ }\vspace{-0.1cm}
\begin{itemize}
	\item We will assume 
	that in the beginning nodes attach their UID to their rumor. In this way, each node can easily tell which of its neighbors' rumors it has already (indirectly) received. This also allows us to identify a node, its rumor and its UID and interpret the set $R_v$ of rumors known to node $v$ as a set of neighboring nodes it has (indirectly) heard from.  
	\item  While the GOSSIP model allows arbitrary local computations and message sizes our algorithms do not exploit this freedom: The local computations are extremely simple and while the set of rumors sent in a message can reach a size of $\Theta(n)$ in the worst-case, this is optimal up to a small polylogarithmic factor considering that in dense graphs the total amount of information learned by nodes during a $1$-neighborhood exchange is of order $\Omega(n^2)$ while only $O(n \polylog n)$ messages are exchanged.
	\item The UID's assumed in our model are \emph{not} used for symmetry breaking but solely to allow nodes to talk about other nodes. The only operations that are used on the UIDs is that they are routed through the network with their rumor and compared for equality to decide whether a node has already heard (indirectly) from one of its neighbors or not. 
\end{itemize}

\subsection{The Rumor Spreading and Local Broadcast Problem}\label{sec:problemdef}

The classical problem to be solved by gossip algorithms is the following global broadcast problem:

\begin{definition}[(Global) Broadcast Problem]
Each node $v$ starts with one rumor $r_v$ and the task is to inform every node about all rumors. 
\end{definition}

In this paper we mostly focus on solving the {$k$-local} broadcast problem which is an important refinement of the global broadcast problem. 

\begin{definition}[{$k$-Local} Broadcast Problem]
In the {$k$-local} broadcast problem each node $v$ starts with one rumor $r_v$ and the task is for each node $v$ to learn all rumors $r_u$ of nodes $u\in \Gamma^k(v)$ in its $k$-neighborhood. 
\end{definition}

\noindent There are several motivations to introduce and study the {$k$-local} broadcast problem:
\begin{itemize}
	\item It generalizes the global broadcast problem. In particular, the global broadcast problem is equivalent to the $n$-local broadcast problem or the {$k$-local} broadcast problem for $k \geq D$, where $D$ is the diameter of the network.
	\item As explained before, the local-broadcast problem has been proven crucial as a subproblem to solve the global broadcast problem in general topologies with bottlenecks.
	\item The local broadcast problem composes nicely. In particular, any $O(T)$ algorithm for the $k$-local broadcast problem also gives an $O(l \cdot T)$ algorithm for the $l \cdot k$-neighborhood exchange problem for any integer $l \geq 1$: Simply repeat the broadcast $l$ times. 
	\item The {$1$-local} broadcast problem and the $(\log^c n)$-local broadcast problem are natural distributed communication problems in their own right: Especially in distributed settings with large networks it is realistic that nodes are only interested in sufficiently local information.
	\item It was observed in~\cite{random} that the {$1$-local} broadcast problem corresponds to one communication step in the LOCAL model. Thus with a $T$-round gossip algorithm for the {$k$-local} broadcast problem any $T'$-round distributed algorithm in the LOCAL model can be simulated by an $\ceil{\frac{T}{k} \cdot T'}$ gossip algorithm. Many distributed problems have $O(\log^c n)$ LOCAL algorithms, that is, can be solved by each node knowing only its $O(\log^c n)$ neighborhood. This reinforces the importance of the $O(\log^c n)$-local broadcast problem. 
\end{itemize}

\medskip 
{\noindent \bfseries Remark:\ }
The GOSSIP model does not initialize nodes with any non-local knowledge about the network In particular, nodes do not know the network size $n$. For the $1$-local broadcast this is not a crucial assumption. Indeed, since it is easy to verify locally whether a 1-local broadcast completed nodes can simply guess an upper bound on $n$ and square their guess if the algorithm does not complete with their guess. For algorithms with a polylogarithmic running time squaring the guess increases the running time by a constant factor and the total running forms geometric sum which is only a constant factor larger than the final execution with the fist correct upper bound for $n$. For randomized algorithms the same verification/restart strategy can be used to transform any Monte Carlo algorithm to a Las Vegas algorithm.

\section{A Simpler Randomized {$1$-Local} Broadcast Gossip Algorithm}\label{sec:rand}

In this section we give a simple randomized gossip algorithms for the {$1$-local} broadcast problem. The algorithm can be seen as a simplification of the algorithm given in~\cite{random}. To analyze this algorithm we use the methods of~\cite{random} namely a decomposition of any graph into expanders and the efficiency of random gossip in expanders. In \Cref{sec:detsimple} we will then further strip down this algorithm and show that replacing its random choices by arbitrary deterministic choices does not affect its correctness or efficiency.


\subsection{Round Robin Flooding}

The gossip algorithm in this section uses a simple round robin flooding subroutine which we introduce here first.

Suppose all nodes have established links to at most $\Delta$ neighbors. It is quite straight forward to flood information along these links in $\Delta$ rounds by each node exchanging information over its links one by one. Essentially repeating this $d$ times floods messages for $d$-hops along all established links in $d \Delta$ steps. For completeness we add the exact statement and algorithm for this flooding procedure next:

\begin{lemma}\label{lem:flooding}
Suppose each node $v$ knows the rumors $R_v$ and has selected $\Delta_v$ links to nodes $n_v(1), \ldots, n_v(\Delta_v)$. Suppose also, the distance $d$ and an upper bound of $\Delta$ on $\max_u \Delta_u$ is given to every node.\\[0.1cm]
Then, Algorithm $1$ spreads each rumor for $d$ hops along the selected links in $\Delta d$ rounds.\\[0.1cm]
That is, each node $v$ knows exactly the rumors in $\bigcup_{u \in \Gamma_{G'}^{d}(v)} R_u$ after termination where $G' = (V, E')$ is the undirected graph with $\displaystyle E'=\bigcup_{v,i \leq \Delta_v}  \{\{v,n_v(i)\}\}$.
\end{lemma}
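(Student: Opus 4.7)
The plan is to prove the lemma by induction on a phase decomposition of the $\Delta d$ rounds. I split the execution into $d$ consecutive phases of $\Delta$ rounds each, where in round $i$ of a given phase every node $v$ with $\Delta_v \ge i$ initiates a call along its $i$-th link $n_v(i)$. The key single-phase sub-claim is that after one phase, for every edge $\{v,u\} \in E'$, nodes $v$ and $u$ have performed at least one bidirectional exchange. This follows immediately from the definition of $E'$: the edge $\{v,u\}$ arises either because $u = n_v(i)$ for some $i \le \Delta_v$, in which case $v$ calls $u$ in round $i$ of the phase, or because $v = n_u(i')$ for some $i' \le \Delta_u$, in which case $u$ calls $v$ in round $i'$ (or both). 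Since calls are bidirectional and $i,i' \le \Delta$, such a call is indeed scheduled within the $\Delta$ rounds of the phase.

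Next I would induct on $j \in \{0,1,\dots,d\}$ to show that after $j$ phases every node $v$ knows exactly $\bigcup_{u \in \Gamma_{G'}^{j}(v)} R_u$. The base case $j=0$ is the initial knowledge, since $\Gamma_{G'}^{0}(v) = \{v\}$ and $v$ starts with $R_v$. For the inductive step I use the identity
\[
\Gamma_{G'}^{j+1}(v) \;=\; \bigcup_{u \in \Gamma_{G'}(v)} \Gamma_{G'}^{j}(u),
\]
which holds because $\Gamma_{G'}$ includes the vertex itself. By the induction hypothesis applied at the end of phase $j$, every $G'$-neighbor $u$ of $v$ holds all rumors in $\bigcup_{w \in \Gamma_{G'}^{j}(u)} R_w$; by the single-phase sub-claim, during phase $j+1$ node $v$ exchanges rumors with each such $u$ and therefore inherits this knowledge. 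Taking the union over all $u \in \Gamma_{G'}(v)$ and including $u = v$ itself gives exactly the rumors indexed by $\Gamma_{G'}^{j+1}(v)$, proving the $\supseteq$ direction.

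The converse inclusion is where the subtlety lies and is the step I expect to be the main obstacle: a priori, within a single phase a rumor could traverse several hops if the sequential ordering of calls aligns favorably, which would only give the trivial $d\Delta$-hop bound instead of the tight one-hop-per-phase bound. The natural way Algorithm~$1$ rules this out is to have each node forward only the rumors it held at the start of the current phase, rather than those learned mid-phase; under this snapshot-based forwarding rule a simple symmetric induction shows that after $j$ phases no node can hold a rumor originating outside its $j$-hop $G'$-neighborhood. Combining the two inclusions yields the equality $\bigcup_{u \in \Gamma_{G'}^{d}(v)} R_u$ after $\Delta d$ rounds, as claimed.
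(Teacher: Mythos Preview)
Your proposal is correct and follows essentially the same route as the paper: decompose into $d$ phases of $\Delta$ rounds, observe that each $G'$-edge is exercised at least once per phase, use the identity $\Gamma_{G'}^{j+1}(v)=\bigcup_{u\in\Gamma_{G'}(v)}\Gamma_{G'}^{j}(u)$, and induct. The only difference is that you make the ``exactly'' direction explicit, correctly pointing to the snapshot rule in Algorithm~$1$ (during a phase a node sends $R_v$, not $R'$, and $R_v$ is updated only at the phase's end); the paper's proof uses this implicitly when asserting $R_v(i{+}1)=\bigcup_{u\in\Gamma_{G'}(v)}R_u(i)$.
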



\vspace{0.35cm}

\boxtext{
\begin{tabbing}
else\= else\= else\= \kill
Algorithm $1$: {\tt Flood (Round Robin)}\\
(Input: max. deg. $\Delta$, own deg. $\Delta_v$, distance $d$,\\
\>\> \  neighbors $n_v(1), \ldots, n_v(\Delta_v)$, rumors $R_v$)\\
\\
REPEAT $d$ times\\
\>$R' = \emptyset$\\
\>FOR $t=1$ to $\Delta$\\
\>\>IF $t \leq \Delta_v$ THEN\\
\>\>\>exchange rumors in $R_v$ with $n_v(t)$\\
\>\>ELSE wait\\
\>\>add all received rumors to $R'$\\
\>$R_v = R_v \cup R'$
\end{tabbing}
}

\vspace{0.35cm}

\begin{proof}[Proof of \Cref{lem:flooding}]
We denote with $R_v(i)$ the set of tokens known to node $v$ at the beginning of iteration $i$. 
The rumors collected in $R'$ during iteration $i$ by node $v$ are exactly the rumors exchanged with nodes neighboring
$v$ in $G'$ since for each undirected link $\{u,v\} \in E'$ either $v$ or $u$ initializes a bidirectional rumor exchange. 
We thus get that for any node $v$ and any iteration $i$ we have $R_v(i+1) = \bigcup_{u \in \Gamma_{G'}(v)} R_u(i)$.
Note furthermore, that for every $v$ and every $k$ we also have $\Gamma_{G'}^{k+1}(v) = \bigcup_{u \in \Gamma_{G'}(v)} \Gamma_{G'}^k(u)$.
Now, using induction on the number of iterations with these two statements we directly obtain that $R_v(d) = \bigcup_{u \in \Gamma_{G'}^d(v)} R_u$ as asserted.
\end{proof}

\subsection{A Simple Randomized {$1$-Local} Broadcast Gossip Algorithm}

In this part we present a simple randomized gossip protocol. The algorithm and even more its analysis are inspired by~\cite{random} but are arguably simpler and more natural. 
 To further simplify the presentation we did not optimize the running time of the algorithm presented here. 

The randomized gossip protocol does the following for each node $v \in V$ in parallel:

\vspace{0.35cm}

\boxtext{
\begin{tabbing}
else\= else\= else\= \kill
Algorithm $2$: {\tt Randomized Gossip}\\
\\
$R_v = v$\\
WHILE $\Gamma_G(v) \setminus R_v \neq \emptyset$\\
\>pick $\Theta(\log^2 n)$ random edges to $\Gamma(v) \setminus R_v$\\
\>$d = \Theta(\log^2 n)$; $E'$ = all (newly) picked edges\\
\>{\tt Flood} rumors in $R_v$ along $E'$-edges for $d$-hops\\
\>add all received rumors to $R_v$

\end{tabbing}
}

\vspace{0.35cm}

It is clear by construction that Algorithm $2$ correctly solves the {$1$-local} broadcast problem since a node continues contacting its neighbors until it has received the rumors from all of them. The next lemma proves that with high probability Algorithm $2$ is furthermore very efficient. 

\begin{lemma}\label{lem:running time randomized gossip}
With high probability Algorithm $2$ takes at most $4 \log n$ iterations and solves the {$1$-local} broadcast problem in $O(\log^6 n)$ rounds (or $O(\log^5 n)$ rounds if only newly picked links are used during the flooding). 
\end{lemma}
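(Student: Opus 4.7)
My plan splits the argument into a routine round-count calculation and a more substantial probabilistic argument that bounds the number of while-loop iterations by $4\log n$ with high probability.

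\textbf{Round count given the iteration bound.} Correctness is immediate from the while-loop guard: a node exits only after it has received the rumor of every neighbor. For the round complexity I would invoke \Cref{lem:flooding} with hop count $d = \Theta(\log^2 n)$ and with $\Delta$ equal to the maximum per-node degree in the selected link-set $E'$. If only the newly picked edges are flooded in a given iteration, then $\Delta = \Theta(\log^2 n)$ so each iteration costs $d\Delta = \Theta(\log^4 n)$ rounds and the $4\log n$ iterations sum to $O(\log^5 n)$. If all links picked so far are accumulated, $\Delta$ grows to $\Theta(\log^3 n)$ over $4\log n$ iterations, each flooding costs $\Theta(\log^5 n)$ rounds, and the total is $O(\log^6 n)$.

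\textbf{Iteration bound via expander decomposition.} For the number of iterations I would follow the methodology of~\cite{random}. At the start of iteration $i$ let the \emph{residual graph} $H_i$ on $V$ have an edge $\{u,v\}$ iff $u \notin R_v$ or $v \notin R_u$; the algorithm terminates precisely when $H_i$ becomes empty. I would apply the graph-theoretic decomposition underlying~\cite{random} to partition the vertices of $H_i$ into clusters $C_1, C_2, \ldots$ so that each induced subgraph $H_i[C_j]$ has conductance $\Phi = \Omega(1/\log n)$ and diameter $O(\log n)$, while at least a constant fraction of the edges of $H_i$ lie inside some cluster. Fix such a cluster $C$ and a node $v \in C$ whose residual neighborhood has a constant fraction inside $C$: a Chernoff bound shows that $\Omega(\log^2 n)$ of the $\Theta(\log^2 n)$ random links picked by $v$ land in $C$ with high probability. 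The resulting random subgraph of selected edges on $C$ dominates a run of uniform random gossip on $H_i[C]$, whose spreading time is $O(\log n / \Phi) = O(\log^2 n)$ by the tight bound of~\cite{uniformfinal}. Since the cluster diameter is $O(\log n)$ and we flood for $d = \Theta(\log^2 n)$ hops, every intra-cluster edge is resolved within the iteration, so $|E(H_{i+1})| \le |E(H_i)|/2$ with high probability.

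Iterating this halving $4\log n$ times, starting from $|E(H_0)| \le \binom{n}{2}$, drives $E(H_{4\log n})$ to $\emptyset$; a union bound over the $\mathrm{poly}(n)$ many relevant events across all iterations preserves the high-probability guarantee. The step I expect to be most delicate is the reduction from ``random picks restricted to an arbitrary residual neighborhood'' to ``uniform random gossip on $H_i[C]$'' when a node's residual degree is much larger than $|C|$; handling those boundary nodes cleanly is precisely where the expander-decomposition machinery of~\cite{random} does its heavy lifting, and it is where I expect the bulk of the technical effort to go.
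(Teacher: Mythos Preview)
Your high-level structure matches the paper's: apply an expander decomposition (\Cref{lem:expanderdecomp}) to the graph of active edges, show that all intra-cluster active edges are resolved in one iteration, and use the fact that at least a third of the edges are intra-cluster to get a geometric decrease, yielding $-\log_{2/3} n^2 < 4\log n$ iterations. Your round-count calculation is also exactly as in the paper.

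The gap is in how you argue that intra-cluster edges get resolved. You try to show via Chernoff that many of a node's $\Theta(\log^2 n)$ picks land inside its cluster $C$, and then dominate uniform gossip on $H_i[C]$. But, as you yourself flag, this breaks for a node $v\in C$ whose residual degree inside $C$ is a tiny fraction of its total residual degree: the decomposition gives no per-node lower bound on $d_v^C/d_v$, so you may get only $o(\log^2 n)$ samples inside $C$, which is not enough to run $\Theta(\log^2 n)$ rounds of uniform gossip there. Deferring this to ``the machinery of~\cite{random}'' is not quite right; the fix is simpler than you expect and is not additional decomposition machinery.

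The paper's resolution is encoded in the precise statements of \Cref{lem:expanderdecomp} and \Cref{lem:diameterofansubsampledexpander}. The decomposition makes each cluster an $\Omega(1/\log n)$-expander \emph{after adding a self-loop for every edge leaving the cluster}. With that convention, a node $v\in C$ sampling uniformly from $\Gamma(v)\setminus R_v$ is \emph{exactly} sampling a uniform incident edge in this self-looped expander---a pick that leaves $C$ is simply a self-loop sample. Then \Cref{lem:diameterofansubsampledexpander} (a direct corollary of the uniform-gossip bound of~\cite{uniformfinal}) says that sampling $T=\Theta(\log n/\Phi)=\Theta(\log^2 n)$ edges per vertex in a $\Phi$-expander yields a subgraph of diameter at most $T$ with high probability. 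Flooding for $\Theta(\log^2 n)$ hops therefore connects every intra-cluster pair, with no separate boundary-node argument needed. Two minor corrections: an $\Omega(1/\log n)$-expander has diameter $O(\log^2 n)$, not $O(\log n)$; and the paper obtains a $2/3$ shrinkage factor rather than $1/2$, which is why the bound is $4\log n$ rather than $2\log n$.
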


To proof \Cref{lem:running time randomized gossip} we need the following two lemmas about expanders (for the definition of the expansion used here we refer to~\cite{random}):

\begin{lemma}[Lemma 3.1 in~\cite{random}]\label{lem:expanderdecomp}
Every graph can be partitioned into disjoint node subsets such that:
\begin{itemize}
	\item Any subset forms a $\Omega(1/\log n)$-expander (when adding edges leaving the subset as self-loops).
	\item At least a third of all edges are intra-partition edges, i.e., both endpoints lie in one subset of the partition. 
\end{itemize}
\end{lemma}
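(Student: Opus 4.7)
The plan is to build the partition greedily via the classical expander-decomposition recipe. Fix $\varepsilon = c/\log n$ for a small constant $c > 0$ to be chosen at the end. Start with the trivial partition $\{V\}$ and iterate: as long as some current part $S$ fails to be an $\varepsilon$-expander when boundary edges are treated as self-loops, there must by definition exist a subset $T \subseteq S$ with $\mathrm{vol}_S(T) \le \mathrm{vol}_S(S)/2$ such that the number of edges leaving $T$ inside $S$ is at most $\varepsilon \cdot \mathrm{vol}_S(T)$. Split $S$ into $T$ and $S \setminus T$, and continue until every part is an $\varepsilon$-expander. Because every split strictly shrinks the ambient part of each element, the process terminates after at most $n$ splits, and the recursion tree has depth at most $\log n$.

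What remains is to bound the number of inter-partition edges. I would use a charging argument: for each split of $S$ into $(T, S \setminus T)$ with $T$ the smaller side (by volume), charge the newly created cut edges to the vertices of $T$. The number of such edges is at most $\varepsilon \cdot \mathrm{vol}_S(T) \le \varepsilon \cdot \mathrm{vol}_G(T)$. Since every vertex $v$ lies on the smaller side of a split at most $\log n$ times along its root-to-leaf path in the decomposition tree (its ambient part at least halves in volume each such time), summing over all splits gives
\[
\#(\text{inter-partition edges}) \;\le\; \varepsilon \cdot \log n \cdot \mathrm{vol}_G(V) \;=\; 2cm.
\]
Choosing $c \le 1/3$ yields at most $2m/3$ cut edges and hence at least $m/3$ intra-partition edges, as claimed.

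The main obstacle is making the charging argument rigorous with the right notion of ``smaller side.'' One must use the \emph{global} degree as the volume measure (so that the ``halving depth'' $\log n$ applies) while the sparse-cut guarantee inside $S$ is stated with respect to the \emph{local} degrees augmented by self-loops; the two agree because self-looping only increases the local volume, so $\mathrm{vol}_S(T) \ge \mathrm{vol}_G(T)|_{\text{restricted to } E(S)}$ is controlled by $\mathrm{vol}_G(T)$. A secondary subtlety is ensuring that the sparse cut exists as soon as a part is not an $\varepsilon$-expander; this follows directly from the edge-expansion definition once the self-loop convention is fixed (self-loops count towards interior volume but not towards boundary). With these conventions set up carefully, correctness and the edge count follow by the clean greedy-plus-charging computation above.
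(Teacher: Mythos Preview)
The paper does not prove this lemma; it is quoted verbatim as Lemma~3.1 of~\cite{random} and used as a black box in the analysis of Algorithm~2. So there is no ``paper's own proof'' to compare against. That said, your argument is the standard recursive sparse-cut decomposition and is essentially correct, and it is indeed the argument underlying the cited result.

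Two small remarks. First, your throwaway claim that ``the recursion tree has depth at most $\log n$'' is false in general: the large-side path can be long. Fortunately you never use this; the bound you actually invoke---that each vertex lies on the \emph{smaller} side at most $O(\log n)$ times---is the correct one and suffices. Second, with the self-loop convention the volume $\mathrm{vol}_S(T)$ equals the global $\mathrm{vol}_G(T)=\sum_{v\in T}d_G(v)$, so the halving is with respect to a quantity starting at $2m\le n^2$; the depth bound is therefore $\log(2m)\le 2\log n$ rather than $\log n$, which just changes the constant $c$ you need at the end.
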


\begin{lemma}\label{lem:diameterofansubsampledexpander}
Let $G$ be an $n$-node graph and expansion $\Phi$ and let $T = \Omega(\frac{\log n}{\Phi})$. Suppose for every vertex $v$ we uniformly sample $T$ neighboring edges (with replacement) and let $G'$ be the subgraph of $G$ consisting of the union of all selected edges. With high probability $G'$ has diameter at most $T$. 
\end{lemma}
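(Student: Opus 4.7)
My plan is a ball-growing BFS argument inside $G'$. Fix a vertex $v$ and write $B_t = B_t^{G'}(v)$ for the set of vertices within distance $t$ of $v$ in $G'$. I will show that, with high probability, $\mathrm{vol}_G(B_t)$ grows by a factor of $(1 + \Omega(\Phi))$ per BFS step as long as $\mathrm{vol}_G(B_t) \leq \mathrm{vol}_G(V)/2$. Since $T = \Omega(\log n/\Phi)$, iterating this growth for $T/2$ steps forces $\mathrm{vol}_G(B_{T/2}(v)) > \mathrm{vol}_G(V)/2$. Applying the same argument from any second vertex $w$ gives $\mathrm{vol}_G(B_{T/2}(w)) > \mathrm{vol}_G(V)/2$; two such sets cannot be vertex-disjoint without the sum of their volumes exceeding $\mathrm{vol}_G(V)$, so they must share a vertex, yielding $d_{G'}(v,w) \leq T$. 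A union bound over the $n^2$ pairs $(v,w)$ then gives the claimed diameter bound simultaneously for all pairs.

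The heart of the proof is the single-step growth lemma: for a fixed $S$ with $\mathrm{vol}_G(S) \leq \mathrm{vol}_G(V)/2$, with probability $1 - n^{-\omega(1)}$ the closed neighborhood $N_{G'}[S]$ has $G$-volume at least $(1+c\Phi)\,\mathrm{vol}_G(S)$. The expansion hypothesis supplies $|E_G(S,\bar S)| \geq \Phi\,\mathrm{vol}_G(S)$ boundary edges. For each such edge $(u,w)$ with $u \in S$, the probability that $u$ picks it in one of its $T$ independent samples is at least $\min\{1/2,\, T/(2\deg_G(u))\}$. Summing over all boundary edges and using $T\Phi = \Omega(\log n)$, a Chernoff bound applied to the $T|S|$ independent trials indexed by vertex and slot concentrates the number of sampled crossing edges around its expectation of order $\Phi\,\mathrm{vol}_G(S)$, which in turn lower-bounds the volume added to $N_{G'}[S]$.

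The step I expect to require the most care is the concentration/dependence issue: although the samples of distinct vertices in $S$ are mutually independent, the events ``vertex $w \in \bar S$ becomes reachable'' are not, because a single $w$ may sit on many boundary edges and sampling is with replacement so multiple slots can hit the same edge. I circumvent this by tracking edge-volume rather than new-vertex count, which is the quantity directly controlled by $\Phi$, and by invoking the principle of deferred decisions during the iteration: expose each vertex's $T$ samples only at the moment it enters the BFS frontier, so that the concentration estimate for step $t+1$ conditions only on the randomness consumed through step $t$. A final union bound over the at most $T \leq n$ BFS layers and the $n$ starting vertices keeps the total failure probability at $n^{-\omega(1)}$, completing the proof.
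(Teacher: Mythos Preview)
The paper proves this lemma in one line by reduction: it observes that $G'$ is precisely the set of edges initiated during a $T$-round run of uniform PUSH gossip, and then invokes the $O(\log n/\Phi)$ broadcast bound of Giakkoupis. Your proposal instead attempts a direct BFS/volume-growth argument, which is the ``simpler methods'' route the paper explicitly speculates about. That is a reasonable goal, but the argument as written has a real gap.

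The single-step growth claim is not established by the Chernoff calculation you outline. You assert that the expected number of sampled crossing edges (summed over the $T|S|$ push trials) is of order $\Phi\,\mathrm{vol}_G(S)$, but this expectation is $T\sum_{u\in S}|E_G(u,\bar S)|/\deg_G(u)$, which need not be comparable to $\Phi\,\mathrm{vol}_G(S)=\Phi\sum_{u\in S}\deg_G(u)$. Concretely, take $G$ to be a star with center $c$; here $\Phi=1$ and $T=\Theta(\log n)$. For $S=\{c\}$ the push expectation is $T\cdot 1=\Theta(\log n)$ while $\Phi\,\mathrm{vol}_G(S)=n-1$, so your lower bound on added volume is off by a $\Theta(n/\log n)$ factor. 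Worse, if you run the push-only BFS with deferred decisions from $c$, after one step you reach $\Theta(\log n)$ leaves, and in the next step those leaves' only fresh samples all point back to $c$: the process stalls with $\mathrm{vol}_G(B_t)\approx n-1+\Theta(\log n)\ll \mathrm{vol}_G(V)$. The true $G'$ has diameter $2$ here only because every leaf's samples hit $c$, i.e., because of the \emph{pull} direction, which your deferred-decisions framework deliberately suppresses.

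More generally, once a vertex has entered the BFS and its $T$ samples are exposed, it contributes no further fresh randomness; only the current frontier $F_t=B_t\setminus B_{t-1}$ does. But the conductance hypothesis bounds $|E_G(B_t,\bar B_t)|$, and most of those boundary edges may emanate from interior vertices of $B_t$ whose samples are already fixed (and demonstrably did not pick them). So the multiplicative $(1+\Omega(\Phi))$ growth per BFS layer does not follow from the push side alone. Fixing this requires either bringing in the pull edges (which breaks the clean deferred-decisions independence) or coupling to the round-by-round gossip process so that each vertex spends one fresh sample per layer --- at which point you are essentially re-deriving the result the paper cites.
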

\begin{proof}
This follows directly from the result of~\cite{uniformfinal} that uniform gossip solves the global broadcast problem in $G$ in $T$ steps. To see this we note that $G'$ can be seen as the graph of edges initiated during such a run of the uniform gossip protocol. Furthermore, since each message travels at most one step in each round and every node learns about all messages in $T$ steps the graph $G'$ has diameter at most $T$. 
\end{proof}

The result of~\cite{uniformfinal} which we used to prove \Cref{lem:diameterofansubsampledexpander} was also used in~\cite{random}. Interestingly, it seems much stronger than \Cref{lem:diameterofansubsampledexpander} and we suspect that \Cref{lem:diameterofansubsampledexpander} itself can be proved using simpler methods. One way would be to use sparsification results to show that under the specified subsampling any $\Phi$-expander maintains its expansion and thus also its diameter of $O(\frac{\log n}{\Phi})$.

We are now ready to proof \Cref{lem:running time randomized gossip}:

\begin{proof}
Note that since flooding for $d$ hops is symmetric we have at the beginning of any iteration $i$ that $u$ has not heard from $v$ yet if and only if $v$ has not heard from $u$. In this case we say the edge $\{u,v\} \in G$ is active and we denote the graph of active edges at the beginning of iteration $i$ with $K_i$. To prove that $\log n$ iterations are sufficient we will show that with high probability the number of active edges decreases by a factor of $2/3$ in every iteration. For this, we apply \Cref{lem:expanderdecomp} on $K_i$ to get a partitioning of nodes into subsets that induce $\Phi = O(1/\log n)$ expanders. The sampling of $\Theta(\log^2 n)$ uniformly random new neighbors in Algorithm $1$ now directly corresponds to subsampling each of these expanders in the same way as described in \Cref{lem:diameterofansubsampledexpander}. From \Cref{lem:diameterofansubsampledexpander} we thus get that with high probability the distance between any two nodes in the same partition along the newly established links is at most $O(\log n/\Phi)=O(\log^2 n)$. With high probability we thus get that any intra-partition edge becomes inactive after flooding for $\Theta(\log^2 n)$ hops along the (newly) selected links. Since \Cref{lem:expanderdecomp} guarantees that at least a third of the active edges are intra-partition edges we have established that with high probability at most $- \log_{2/3} n^2 < 4 \log n$ many iterations are needed. 
To determine the total running time we note that each node established at most $\Theta(\log^2 n)$ new links in each iteration. The number of total links established by a node is thus at most $O(\log^3 n)$. This means we can run the flooding protocol, i.e., Algorithm $2$, with $d = \Theta(\log^2 n)$ and $\Delta = O(\log^3 n)$ (or $\Delta = O(\log^2 n)$ if we only use newly established links). The running time for one iteration is thus $O(\log^5 n)$ (or $O(\log^4 n)$) rounds according to \Cref{lem:flooding}. Over at most $4 \log n$ iterations this sums up to a total of $O(\log^6 n)$ (or $O(\log^5 n)$) rounds. 
\end{proof}

\section{The New Deterministic {$1$-Local} Broadcast Gossip Protocol}\label{sec:detsimple}

In this section we give the simplest description of our deterministic gossip protocol for the {$1$-local} broadcast problem. It is identical to Algorithm $2$ presented in the last section except for the following three simplifications (marked in bold in Algorithm $3$):
\begin{itemize}
	\item Instead of $\Theta(\log^2 n)$ new edges per round only {\bfseries one} new edge is added per node.
	\item Instead of $\Theta(\log^2 n)$ hops messages are only flooded for ${\mathbf 2}$ {\bfseries log} $\mathbf n$ hops.
	\item Most importantly, the new edge(s) are not anymore required to be chosen uniformly at random but can be chosen in
	any {\bfseries arbitrary (deterministic)} way. 
\end{itemize}	
Surprisingly, we will show next that this severely stripped down algorithm still takes only $\log n$ iterations to solve the {$1$-local} broadcast problem (now deterministically and always instead of with high probability).\\ 
	
As already described, our deterministic gossip protocol does the following for each node $v \in V$ in parallel:

\vspace{0.35cm}

\boxtext{
\begin{tabbing}
else\= else\= else\= \kill
Algorithm 3: {\tt Deterministic Gossip}\\
\\
$R_v = v$\\
WHILE $\Gamma(v) \setminus R_v \neq \emptyset$\\
\>{\bfseries arbitrarily} pick {\bfseries one} new edge to $\Gamma(v) \setminus R_v$\\
\>$d = {\mathbf 2}$ {\bfseries log} $\mathbf n$; $E'$ = all established links\\
\>{\tt Flood} rumors in $R_v$ along $E'$-edges for $d$-hops\\
\>add all received rumors to $R_v$
\end{tabbing}
}

\vspace{0.35cm}

\begin{lemma}\label{lem:running time deterministic gossip}
Algorithm $3$ takes at most $\log n$ iterations and solves the {$1$-local} broadcast problem in at most $2 \log^3 n$ rounds.
\end{lemma}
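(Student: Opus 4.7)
My plan is to bound the iteration count and the round count separately. The round count is essentially arithmetic: after iteration $i$ every node has picked at most $i$ new links, so the maximum selected-degree $\Delta$ passed into \Cref{lem:flooding} satisfies $\Delta \le i$, and combined with flood depth $d = 2\log n$, iteration $i$ costs $\Delta\cdot d \le 2 i\log n$ rounds. Summing $\sum_{i=1}^{\log n} 2 i\log n = \log^2 n(\log n+1) \le 2\log^3 n$ for $n \ge 2$ gives the claimed round bound, once the iteration bound is established.

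For the iteration count the plan is a doubling invariant: by induction on $i$, every node still active at the start of iteration $i+1$ has $|R_v^{(i)}| \ge 2^i$. Since $|R_v^{(i)}| \le n$ this forces termination by iteration $\log n$. The base case $R_v^{(0)} = \{v\}$ is trivial. For the step, pick $v$ active at iteration $i+1$ and let $u \in \Gamma(v) \setminus R_v^{(i-1)}$ be the neighbor that $v$ chooses in iteration $i$. A key symmetry --- rumors move bidirectionally along established links, so the heard-from relation is symmetric at every moment --- gives $v \notin R_u^{(i-1)}$; hence $u$ too is active at iteration $i$, and by the hypothesis both $|R_v^{(i-1)}|, |R_u^{(i-1)}| \ge 2^{i-1}$. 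Because $\{v,u\}$ is now an edge of the established-link graph $H_i$, already a single hop of iteration $i$'s flood transfers all of $R_u^{(i-1)}$ into $v$'s knowledge, giving $R_v^{(i)} \supseteq R_v^{(i-1)} \cup R_u^{(i-1)}$.

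The delicate step I expect to be the hardest is closing the doubling by showing $R_v^{(i-1)} \cap R_u^{(i-1)} = \emptyset$, so that the union really has $\ge 2^i$ elements. My plan here is to strengthen the inductive hypothesis so that $R_v^{(i-1)}$ equals the connected component of $v$ in $H_{i-1}$; then two nodes in different components automatically have disjoint heard sets, and $u \notin R_v^{(i-1)}$ places $u$ in a different component from $v$. Maintaining this component/heard-set equality inductively amounts to showing that components in $H_i$ never have diameter exceeding the flood depth $2\log n$ --- a bound I would verify by a separate induction that tracks how chains of merges within a single iteration can inflate component diameters (intuitively manageable because each node only adds one new edge per iteration and we run for at most $\log n$ iterations). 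This diameter control is the main technical work of the proof; once it is in hand the doubling invariant closes and both the $\log n$ iteration and $2\log^3 n$ round bounds follow.
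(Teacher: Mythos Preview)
Your round-count arithmetic is fine (the paper uses a fixed $\Delta=\log n$ per iteration rather than $\Delta=i$, but both sum to at most $2\log^3 n$). The gap is in the iteration bound. The disjointness you need, $R_v^{(i-1)}\cap R_u^{(i-1)}=\emptyset$, is \emph{false} in general, and so is the stronger component claim you plan to use to get it. Take $G$ to be an $n$-cycle $v_1,\ldots,v_n$ and in iteration~$1$ let every $v_j$ with $j<n$ pick $v_{j+1}$ while $v_n$ picks $v_{n-1}$. Then $H_1$ is a Hamiltonian path of diameter $n-1\gg 2\log n$, so ``components of $H_i$ have diameter at most $2\log n$'' already fails at $i=1$; your intuition that ``each node adds only one edge'' controls merge chains is wrong, because a single iteration can chain $n-1$ new edges end to end. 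Concretely, for $n=12$ (so $2\log n=8$) one gets $R_{v_1}^{(1)}=\{v_1,\ldots,v_9\}$ and $R_{v_{12}}^{(1)}=\{v_4,\ldots,v_{12}\}$; these overlap, yet $v_{12}\notin R_{v_1}^{(1)}$ and $v_1$ is forced to pick $u=v_{12}$ in iteration~$2$. So neither the component equality nor the bare disjointness holds.

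The paper's proof avoids this by tracking a much smaller witness than $R_v$: it inductively builds, for every still-active $v$, a binomial $i$-tree $\tau_v\subseteq H_i$ rooted at $v$ with exactly $2^i$ vertices and, crucially, \emph{depth at most $i\le\log n$}. The depth bound is the missing ingredient: if $\tau_v$ and $\tau_u$ shared a vertex, their roots would be joined by a path of length at most $2\log n$ in $H_i$, and the $2\log n$-hop flood of that iteration would have put $u$ into $R_v$, contradicting the choice of $u$. So disjointness is proved for the trees, not for the full rumor sets; your doubling target $|R_v^{(i)}|\ge 2^i$ then follows because the $i$-tree sits inside the $2\log n$-ball of $v$ in $H_i$ and hence inside $R_v^{(i)}$. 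In short, your plan aims at the right invariant but with the wrong witness; replacing $R_v^{(i)}$ by a bounded-depth subtree is what makes the argument go through.
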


We first need to define binomial trees for our analysis: 

\begin{definition}
A binomial tree of order $2^i$ or short $i$-tree is a rooted depth $i$ tree on $2^i$ nodes that is inductively defined as follows:
A $0$-tree consists of a single node. For any $i \geq 0$ a $i+1$-tree is formed by 
taking two $i$-trees, connecting their roots and declaring one of the roots as the new root.
\end{definition}

\begin{figure}[ht]
\begin{center}
\includegraphics[width=0.75\textwidth]{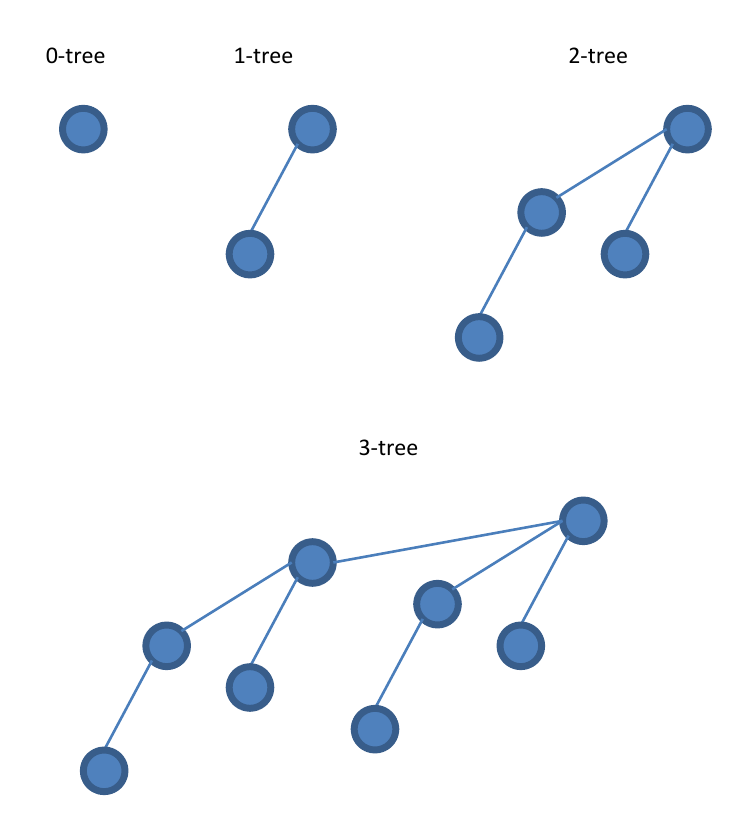}
  \caption{$i$-trees for $i \in \{0,1,2,3\}$. \label{fig:trees}}
\end{center}
\end{figure}

To prove Algorithm $3$ efficient we will use a short and simple inductive argument that for any node that has not terminated at iteration $i$ it is possible to find a $i$-tree rooted at it in $G$. Since $i$-trees grow exponentially in $i$ this limits the number of iterations to $\log n$. 


\begin{lemma}\label{lem:induction statement}
Consider the beginning of any iteration $0 \leq i \leq \log n$ in Algorithm $3$ and $H_i$ be the graph of all edges used until then.
Suppose there is a node $v_0$ with $k$ missing rumors, that is, $\Gamma_G(v_0) \setminus R_{v_0} = \{v_1, \ldots, v_k\}$. Then there are $k+1$ many $i$-trees $\tau_0,\ldots,\tau_{k}$ as subgraphs in $H_i$ rooted at $v_0,v_1,\ldots,v_k$ respectively such that $\tau_0$ is vertex disjoint from $\tau_j$ for any $0 < j \leq k$.
\end{lemma}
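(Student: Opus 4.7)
My plan is to prove a slightly strengthened inductive hypothesis that avoids the awkwardness of inductively maintaining a simultaneous disjointness condition, and then derive the lemma's disjointness separately via a short distance argument. Specifically, I would prove by induction on $i$ the \emph{single-root} statement: \emph{at the beginning of any iteration $0\le i\le\log n$, every node $v$ with at least one missing rumor has an $i$-tree rooted at $v$ as a subgraph of $H_i$.} The lemma then follows immediately because, by symmetry of flooding ($v_0$ missing $v_j$ means $v_j$ missing $v_0$), each $v_j$ also satisfies the hypothesis of the single-root statement and hence yields a tree $\tau_j$; the required disjointness $\tau_0\cap\tau_j=\emptyset$ will come from the distance argument below.

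The crucial geometric fact is: \emph{if $u$ is missing from $R_v$ at the beginning of iteration $i$, then $\mathrm{dist}_{H_i}(v,u)>2\log n$.} For $i=0$ this is vacuous since $H_0$ is empty. For $i\ge 1$ it follows from \Cref{lem:flooding} applied to iteration $i-1$: the flood along $E'=H_i$ for $d=2\log n$ hops would have delivered $u$'s rumor to $v$ had $u$ been within $2\log n$ hops of $v$ in $H_i$. Combined with the triangle inequality and the fact that every vertex of an $i$-tree lies within $i\le\log n$ hops of its root in $H_i$, this immediately forces $\tau_0\cap\tau_j=\emptyset$ in the lemma, since a common vertex would imply $\mathrm{dist}_{H_i}(v_0,v_j)\le 2i\le 2\log n$.

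The induction itself is then short. The base case $i=0$ is immediate as $\{v\}$ is a $0$-tree in the empty graph $H_0$. For the step from $i$ to $i+1\le\log n$, suppose $v$ has a missing rumor at iteration $i+1$; then it had one at iteration $i$ as well, so the IH produces an $i$-tree $T_v$ rooted at $v$ in $H_i$. Let $w$ be the new edge $v$ picked during iteration $i$; by flooding symmetry $v$ was missing from $R_w$ at iteration $i$, so the IH also produces an $i$-tree $T_w$ rooted at $w$ in $H_i$. The distance fact gives $\mathrm{dist}_{H_i}(v,w)>2\log n\ge 2i$, which by the triangle inequality forces $T_v\cap T_w=\emptyset$; gluing them via the new edge $\{v,w\}\in H_{i+1}$ yields the desired $(i+1)$-tree at $v$. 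The main obstacle I foresee is resisting the temptation to carry the lemma's two-tree disjointness through the induction directly---that approach requires coordinating trees obtained from separate IH invocations at different roots, which is genuinely awkward---and instead strengthening to a single-root existence statement and outsourcing all disjointness to the one-shot distance argument.
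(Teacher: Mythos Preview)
Your proof is correct and is essentially the paper's own argument: both build binomial trees by induction on $i$, use the symmetry of $2\log n$-hop flooding to conclude that each missing neighbor $v_j$ is itself still active, and obtain the disjointness of $\tau_0$ and $\tau_j$ from the observation that a shared vertex would put $v_0$ and $v_j$ within distance $2\log n$ in $H_i$, contradicting that the flood did not deliver $v_j$ to $v_0$. The only difference is organizational: the paper carries the full lemma as its inductive hypothesis and uses the IH's disjointness clause (applied to $v_0$ and its newly chosen neighbor $u_0$) exactly where you instead invoke the distance bound a second time; your single-root reformulation is a clean repackaging of the same idea, though note that your IH is technically a \emph{weakening} of the lemma, not a strengthening.
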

\begin{proof} 
We prove the lemma by induction on $t$. The base case for $i=0$ follows directly from the fact that each node forms its own $0$-tree. For the inductive step we assume a vertex $v_0$ which at the beginning of iteration $i+1 \leq \log n$ is still active. Let $u_0$ be the vertex contacted by $v_0$ in iteration $i$. By induction hypotheses in the beginning of iteration $i$ there was an $i$-tree rooted at $v_0$ and a vertex disjoint $i$-tree rooted at $u_0$. These two trees together with the new edge $\{v_0,u_0\}$ form the new $i+1$-tree $\tau_0$ in $H_{i+1}$. 

Next, we note that the symmetry of flooding for $d$ hops ensures that whenever an ID $a$ gets added to $R_b$ the ID $b$ also gets added to $R_a$. Therefore if $\Gamma_G(v_0) \setminus R_{v_0} = \{v_1, \ldots, v_k\}$ at the beginning of iteration $i+1$ it must be that all $v_j \in \Gamma_G(v_0) \setminus R_0$ also have $v_0 \in \Gamma_G(v_j) \setminus R_{v_j}$. Every node $v_j \in \Gamma_G(v_0) \setminus R_0$ was therefore also active at the beginning of round $i$ and must have chosen an edge to a node $u_j$. Similarly as done for $v_0$ we can find an $i+1$-tree $\tau_j$ that consists of the $i$-trees rooted at $v_j$ and $u_j$ at iteration $i$ and the edge $\{v_j,u_j\}$. It only remains to show that $\tau_0$ and $\tau_j$ are node disjoint for all $j$. Assume for sake of a contradiction that there is a $j$ such that $\tau_0$ and $\tau_j$ share a node. In this case there is a path in $H_{i+1}$ from $v_0$ to $v_j$ of length at most $2 \log n$ as the depths of both $\tau_0$ and $\tau_j$ is at most $\log n$. But $H_{i+1}$ is the graph along which IDs and rumors are flooded for $2 \log n$ hops during iteration $i$. Thus $v_j$ would be in $R_{v_0}$ at the beginning of iteration $i+1$ -- this is the desired contradiction that completes the proof. 
\end{proof}

\begin{proof}[Proof of \Cref{lem:running time deterministic gossip}]
Algorithm $3$ correctly solves the {$1$-local} broadcast problem by construction since it keeps contacting new neighbors until it has received the rumors from all of them. \Cref{lem:induction statement} furthermore proves that if the protocol is not done after the iteration $\log n$ at the beginning of the next iteration we can find two neighbors that do not know of each other and two node-disjoint $(\log n)$-trees as subgraphs of $G$. Since the number of nodes is $n$ this is impossible and shows that Algorithm $3$ performs at most $\log n$ iterations. In each iteration at most one new link is established per node for a total of at most $\log n$ links. Flooding for $d = 2 \log n$ hops in the graph of all established links using Algorithm $2$ with $\Delta = \log n$ thus takes $2 \log^2 n$ rounds. Over $\log n$ iteration this accumulates to $2 \log^3 n$ rounds in total. 
\end{proof}

\section{More Efficient Broadcast Protocols}\label{sec:more}

In this section we show how to tweak the deterministic gossip protocol from \Cref{sec:detsimple} to achieve faster {$k$-local} broadcast protocols. In particular, in this section we prove \Cref{thm:main} by showing how to solve the {$k$-local} broadcast problem deterministically in $2(k + \log n)\log n$ rounds. 

\subsection{Faster {$1$-Local} Broadcast via Deterministic Tree Gossip}

First, we speed-up our solution for the {$1$-local} broadcast problem by replacing the flooding step in Algorithm $3$. We use two key observations: 

\begin{itemize}
	\item The flooding steps in Algorithm $3$ are only performed to ensure that in every iteration $i$ any node $v_0$ picks a new neighbor whose $i$-tree does not intersect with its own $i$-tree.
	\item The structure of these $i$-trees allows for spreading rumors within the trees faster than using the round robin flooding procedure. 
\end{itemize}

To better understand the structure of the $i$-trees constructed in the proof of \Cref{lem:induction statement} we will give an alternative construction. For any node $v_0$ that has not terminated until iteration $i$ we construct its $i$-tree $\tau_0$ as follows: 

\smallskip

The root of $\tau_0$ is $v_0$ and its children are the nodes $u_1,\ldots,u_{i-1}$ contacted by $v_0$ in the iterations up to $i$. For each of these child nodes $u_{i'}$ we then attach as children all nodes $w_1,\ldots,w_{i'-1}$ contacted by node $u_{i'}$ in the iteration up to $i'$. We continue inductively for each of these nodes $w_{i''}$ . 

\smallskip

It is easy to see that this produces the same $i$-tree as the one constructed in the proof of \Cref{lem:induction statement}. In addition to being helpful for our proofs this construction has a nice interpretation. The $i$-tree $\tau_0$ in \Cref{lem:induction statement} can be seen as a witness structure that certifies and explains why node $v_0$ was active until iteration $i$, namely:

\smallskip

Node $v_0$ did not terminate until iteration $i$ because there were the neighbors $u_1,\ldots,u_{i-1}$ that were active and unknown to $v_0$ at time $1$ to $i-1$ respectively resulting in $v_0$ contacting them. Each of these nodes $u_{i'}$ on the other hand was still active and did not contact $v_0$ itself until iteration $i'$ because of its neighbors $w_1,\ldots,w_{i'-1}$ that were active and unknown to $u_{i'}$ at time $1$ to $i'-1$ respectively resulting in $u_{i'}$ contacting them, and so on. 

\smallskip

With this interpretation the proof of \Cref{lem:running time deterministic gossip} essentially says that Algorithm $3$ cannot have an node $v_0$ that is still active after $\log n$ iterations since any explanation $\tau_0$ for why it is still active would have to blame more nodes than exist in the network.  

\smallskip

Next we will show how to exploit the structure of these $i$-trees. The deterministic gossip protocol that does this performs the following for each node $v \in V$ in parallel:

\vspace{0.35cm}

\boxtext{
\begin{tabbing}
else\= else\= else\= \kill
Algorithm 4: {\tt Deterministic Tree Gossip}\\
\\[-0.1cm]
$R = v$\\
FOR $i=1$ UNTIL $\Gamma(v) \setminus R = \emptyset$\\
\>link to any new neighbor $u_t \in \Gamma(v) \setminus N$\\
\>$R' = v$\\
\>PUSH: For $j=i$ downto $1$:\\
\>\>exchange rumors in $R'$ with $u_{j}$\\
\>\>add all received rumors to $R'$\\
\>PULL: For $j=1$ to $i$:\\
\>\>exchange rumors in $R'$ with $u_{j}$\\
\>\>add all received rumors to $R'$\\
\>$R'' = v$\\
\>perform PULL, PUSH with $R''$\\
\>$R = R' \cup R''$
\end{tabbing}
}

\vspace{0.35cm}

{\noindent \bfseries Remark:\\}
Note that Algorithm $4$ does not require knowledge of the network size $n$. This can also be achieved in Algorithm $3$ if one instead of flooding for $2 \log n$ hops in each iteration only floods for $2i$ hops in the $i$th iteration. This also speeds up the running time by a factor of two and avoids the guess-and-double strategy remarked in \Cref{sec:model} for the case that $n$ is unknown.

\begin{theorem}\label{thm:fasteralgorithm}
Algorithm $4$ solves the {$1$-local} broadcast problem in $\log n$ iterations and less than $2 (\log n +1)^2$ rounds. 
\end{theorem}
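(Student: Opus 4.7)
My plan splits into three parts: correctness of the 1-local broadcast, the per-iteration round count, and the bound of $\log n$ on the number of iterations.

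Correctness is immediate from the loop condition: the \texttt{FOR} loop exits only when $\Gamma(v)\setminus R=\emptyset$, so $v$ necessarily learns every neighbor's rumor. For the round count, in iteration $i$ the \texttt{PUSH} and \texttt{PULL} phases each consist of $i$ sequential exchanges, and the subsequent \texttt{PULL}-\texttt{PUSH} with $R''$ contributes another $2i$ rounds, giving $4i$ rounds per iteration. Summing over $i=1,\dots,\log n$ yields $\sum_{i=1}^{\log n} 4i = 2\log n(\log n+1)<2(\log n+1)^{2}$, matching the theorem.

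The bulk of the work is bounding the iteration count by $\log n$, for which I would replay the strategy of \Cref{lem:induction statement} but using the alternative recursive construction of the $i$-tree given just before Algorithm~$4$: rooted at $v$, with children $u_{1},\dots,u_{i}$ (the new neighbors picked by $v$ in iterations $1,\dots,i$), and the subtree hanging off $u_{j}$ being $u_{j}$'s own $(j-1)$-tree as of the moment $v$ first contacted it. The main claim to establish, by induction on $i$, is the invariant that at the end of iteration $i$ every still-active node $v$ satisfies $R_{v}\supseteq\{\text{IDs of nodes in }\tau_{v}^{i}\}$. The base case $i=0$ is trivial. For the inductive step, I would argue that when $v$ exchanges with child $u_{j}$ during the iteration-$i$ \texttt{PUSH}/\texttt{PULL}, the accumulated $R'_{u_{j}}$ already contains the information from $u_{j}$'s subtree thanks to $u_{j}$'s own simultaneously-executed schedule, so that $v$ inherits the full subtree $\tau_{u_{j}}^{j-1}$ in a single exchange.

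The main obstacle is formalizing this information-pipelining claim: the ordering (\texttt{PUSH} sweeps $j=i$ down to $1$, \texttt{PULL} sweeps $j=1$ up to $i$, then the second phase reverses the order with a fresh $R''$) is carefully chosen so that rumors traverse a binomial tree of $2^{i}$ nodes in only $O(i)$ rounds instead of $O(2^{i})$. I would prove it by induction on the depth within the tree, tracking at each step of the four-phase schedule which IDs have reached which node's $R'$ (resp.\ $R''$), and showing that the union $R'\cup R''$ covers the whole tree; this is analogous to standard binomial-tree broadcast/gather schedules but complicated here by the fact that different active nodes may simultaneously be running the same schedule on overlapping trees.

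Once the invariant is in hand, the disjointness argument from the proof of \Cref{lem:induction statement} carries over essentially verbatim: if at the start of iteration $i+1$ some node $v_{0}$ still has a missing neighbor $v_{j}$, symmetry of the exchanges forces $v_{j}$ to be active too and to have built its own $i$-tree, and any overlap of $\tau_{v_{0}}^{i}$ with $\tau_{v_{j}}^{i}$ at a common node $x$ would, by the same pipelining argument, propagate $v_{j}$'s ID through $x$ into $R_{v_{0}}$ -- contradicting that $v_{j}$ is missing. Hence at most $\log n$ iterations can occur, because otherwise the graph would contain at least two vertex-disjoint $(\log n)$-trees of total size $2\cdot 2^{\log n}=2n>n$.
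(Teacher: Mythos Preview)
Your outline follows the paper's approach closely: correctness from the loop guard, the round count $\sum_{i=1}^{\log n}4i=2\log n(\log n+1)$, and the reuse of \Cref{lem:induction statement} with binomial $i$-trees to bound the number of iterations. Two points deserve sharpening.

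\emph{The pipelining argument.} Your proposed inductive invariant (``when $v$ exchanges with $u_j$, the accumulated $R'_{u_j}$ already contains $u_j$'s subtree'') is the hard way around. The paper isolates a much cleaner observation as \Cref{lem:treepipelining}: in the alternative $i$-tree construction, every root-to-node path traverses edges whose iteration-labels are \emph{strictly decreasing}, and \texttt{PUSH} activates links in exactly that decreasing order. Hence one \texttt{PUSH} sweep pipelines the root's ID down to every node of its tree; one \texttt{PULL} sweep (the reversed order) pipelines everything back up to the root. With this, the disjointness step is immediate: if $\tau_{v_0}$ and $\tau_{v_j}$ share a node $x$, then \texttt{PUSH} sends $v_j$ to $x$ along $\tau_{v_j}$ and \texttt{PULL} sends $x$'s knowledge to $v_0$ along $\tau_{v_0}$, so $v_0$ learns $v_j$. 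Your invariant ``$R_v\supseteq V(\tau_v^{i})$'' is a consequence of this, but by itself it does not give you $v_j\in R_{v_0}$ when only the trees overlap; you correctly end up invoking ``the same pipelining argument'' anyway, so you might as well make that the lemma.

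\emph{Symmetry.} You assert that ``symmetry of the exchanges forces $v_j$ to be active too,'' but in Algorithm~4 symmetry of knowledge is not automatic the way it is for $d$-hop flooding in Algorithm~3. The paper spends a sentence on exactly this: the second phase \texttt{PULL}--\texttt{PUSH} with fresh $R''$ is the time-reversal of the first \texttt{PUSH}--\texttt{PULL}, so any information path that carried $u$ into $R'_v$ in the first half carries $v$ into $R''_u$ in the second half; taking $R=R'\cup R''$ then yields $u\in R_v\Leftrightarrow v\in R_u$. You should state this explicitly rather than leave it as an aside, since without it the inductive step (that $v_j$ is still active and hence also linked in iteration $i$) does not go through.
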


The proof is essentially the same as for \Cref{lem:running time deterministic gossip} except for the use of the following lemma:

\begin{lemma}\label{lem:treepipelining}
Suppose $u$ and $v$ are two nodes that are active at the beginning of iteration $i$ and that $\tau_u$ and $\tau_v$ are their ordered $i$-trees respectively. Then, after the PUSH exchanges in iteration $i$ all nodes in $\tau_u$ have learned about $u$ (and all nodes in $\tau_v$ have learned about $v$). Furthermore, if $\tau_u$ and $\tau_v$ are not node disjoint then after the PUSH-PULL exchanges in iteration $i$ the node $u$ has learned about $v$ (and vice versa). 
\end{lemma}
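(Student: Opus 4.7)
The plan is a classical pipelining argument for binomial trees, made to work by one structural invariant of the ordered trees built in Algorithm~$4$: if one follows any path $v = x_0, x_1, \ldots, x_d$ from the root of $\tau_v$ to a descendant, and writes $j_k$ for the rank of $x_k$ in $x_{k-1}$'s list of linked neighbors (so $x_k$ is $x_{k-1}$'s $j_k$-th linked neighbor), then the labels satisfy $j_1 > j_2 > \cdots > j_d \geq 1$ strictly. This is immediate from the recursive definition of the binomial tree, because the subtree rooted at the $j$-th linked child of any node is a $(j-1)$-tree whose own children can only carry labels in $\{1, \ldots, j-1\}$.

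For the first claim I would prove by induction on the depth $d$ that every $x \in \tau_v$ acquires the rumor $v$ in $R'$ by the end of round $i - j_d$ of PUSH (numbering rounds $0, 1, \ldots, i-1$, so that round $t$ is a call to the $(i-t)$-th linked neighbor). The case $d = 0$ is trivial and the case $d = 1$ is immediate: $v$ calls $u_{j_1}$ in round $i - j_1$. For the inductive step the hypothesis gives $v \in R'_{x_{d-1}}$ by the end of round $i - j_{d-1}$; then in round $i - j_d$ node $x_{d-1}$ calls its $j_d$-th linked neighbor, which is $x_d$, and transmits its current $R'$. The strict inequality $j_{d-1} > j_d$, equivalently $i - j_d > i - j_{d-1}$, guarantees that $v$ is already present in $R'_{x_{d-1}}$ at the moment of that exchange. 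Since $j_d \geq 1$, every node in $\tau_v$ is reached by round $i - 1$, still inside PUSH. The statement for $u$ and $\tau_u$ is identical because $u$ is also active at iteration $i$ and runs its PUSH in parallel.

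For the second claim, fix any shared node $x \in \tau_u \cap \tau_v$. By the first part applied to both roots, both $u$ and $v$ lie in $R'_x$ after PUSH. I then rerun the same pipelining idea \emph{in reverse} during PULL, which visits linked neighbors in the order $j = 1, 2, \ldots, i$, so round $t$ of PULL is a call to the $(t+1)$-st linked neighbor. Along the path $v = x_0, x_1, \ldots, x_d = x$ in $\tau_v$, in round $j_d - 1$ of PULL node $x_{d-1}$ calls $x_d$ and receives $u$; inductively, once $x_k$ has $u$, in round $j_k - 1$ of PULL its parent $x_{k-1}$ calls $x_k$ and receives $u$. The strict inequality $j_{k-1} > j_k$ gives $j_{k-1} - 1 > j_k - 1$, so $x_k$ already has $u$ by the time it is called. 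Hence $v$ receives $u$ by round $j_1 - 1 \leq i - 1$, still inside PULL. By symmetry on a path in $\tau_u$, $u$ similarly receives $v$ during the same PULL.

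The only real obstacle is bookkeeping of round indices. The key identity that makes everything go through is that a parent-to-child exchange along an edge with order label $j$ happens in round $i - j$ of PUSH and in round $j - 1$ of PULL; under both conventions, strictly decreasing labels along any tree path translate into strictly increasing communication rounds, which is precisely the timing condition that pipelining in a binomial tree requires.
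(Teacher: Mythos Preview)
Your proof is correct and follows essentially the same approach as the paper's: both hinge on the observation that along any root-to-descendant path in $\tau_v$ the edge labels strictly decrease, so PUSH (which activates edges in decreasing label order) pipelines the root's rumor outward, and PULL (increasing order) pipelines a shared node's contents back to the root. Your version simply makes the induction on depth and the round indexing explicit, whereas the paper states the pipelining in one sentence; the underlying argument is identical.
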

\begin{proof}
We note that by construction all paths from the root $u$ to a any other node in $\tau_u$ follow tree edges in decreasing order. During the first PUSH sequence edges are activated in decreasing order which thus pipelines the rumor of $u$ from the root to all nodes in $\tau_u$. Similarly, rumors known to any nodes in $\tau_u$ gets pipelined towards the root $u$ during the first PULL sequence. The same is true by symmetry for $v$ and $\tau_v$. Now, if $\tau_u$ and $\tau_v$ share a node $y$ then $y$ will learn about $u$ and $v$ during the first PUSH sequence of iteration $i$ and then forward this information to $u$ and $v$ during the first PULL sequence -- informing both nodes about each other. 
\end{proof}

\begin{figure}[ht]
\begin{center}
\includegraphics[width=0.8\textwidth]{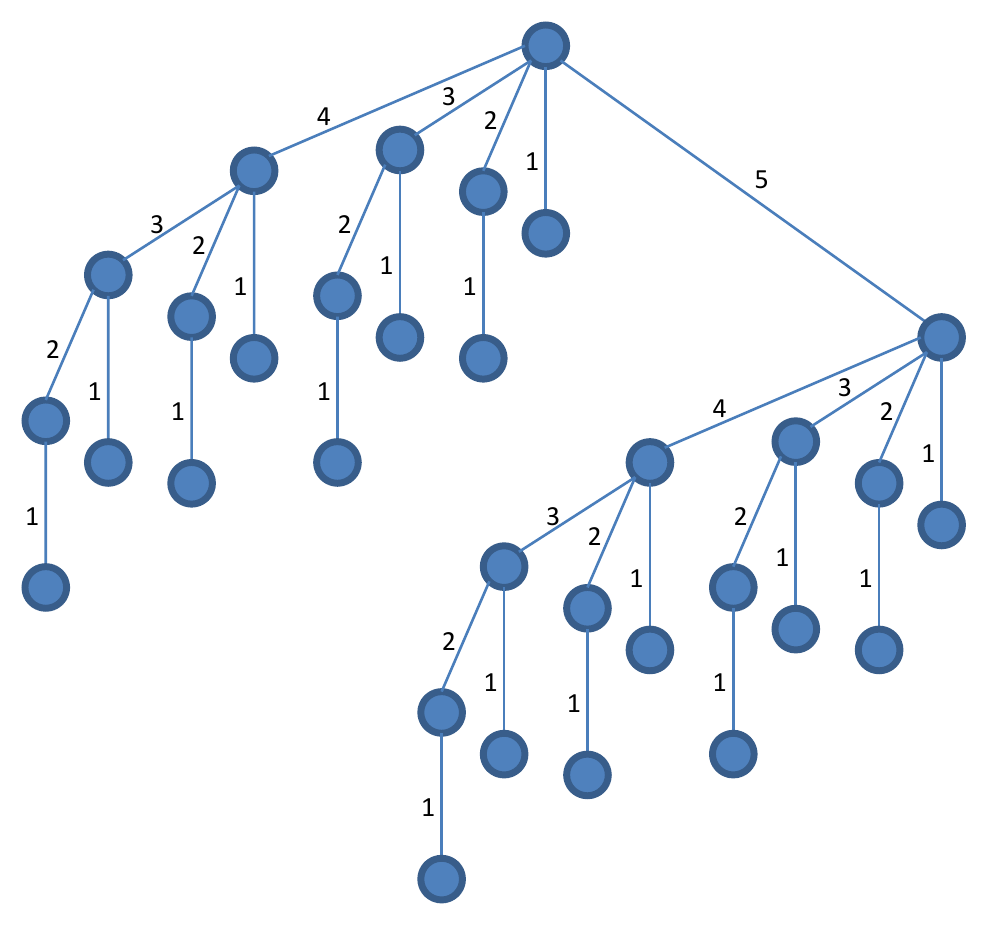}
 \caption{A $5$-tree. The labels on the edges denote at what time they were added. These edges are activated from low to high during a PUSH exchange and activated from high to low in a PULL exchange. Note that the path from the root to any node follows edges in decreasing order. \label{fig:5tree}}
\end{center}
\end{figure}

\begin{proof}[\Cref{thm:fasteralgorithm}]
We use the exact statement as in \Cref{lem:induction statement} to show that only $\log n$ iterations are performed by Algorithm $4$. The first thing to check is that Algorithm $4$ still maintains symmetry with regards to which nodes knows which. This is achieved by the the fact that the PULL-PUSH sequence of the second part is exactly the reversal of the PUSH-PULL sequence of the first part of any iteration. As such, if a node $v$ learns about a node $u$ in the PUSH-PULL sequence of exchanges then $u$ will learn about $v$ during the PULL-PUSH exchanges and vice versa. By taking $R$ to be the union of $R'$ and $R''$ it is clear that we indeed get $u \in R_v$ if and only if $v \in R_u$. The rest of the proof of \Cref{lem:induction statement} goes exactly the same except that we relay on \Cref{lem:treepipelining} for the fact that if $v_0$ and $v_i$ do not know about each other then their $i$-trees must be disjoint. This completes the proof that at most $\log n$ iterations are performed. To determine the total number of rounds needed we note that in the $i$th iteration exactly $4i$ exchanges are performed. This leads to a total round complexity of $\sum_{i=1}^{\log n} 4 i = 2 \log n (\log n + 1)$.
\end{proof}

\subsection{Speeding up the {$k$-Local} Broadcast for $k>1$}

What remains to show for \Cref{thm:main} is how to use our algorithms for the $k$-local broadcast problem for $k>1$. As remarked in \Cref{sec:problemdef} a straight forward way to use Algorithm $2$ -- $4$ for the $k$-local broadcast problem is simply applying them $k$ times. Using our Algorithm $4$ this leads to a $4 k \log^2 n$ solution. We can improve upon this by realizing that in all algorithms the {$1$-local} broadcast problem is actually solved completely during the flooding or tree-broadcast in the last iteration. All prior iterations are only needed to guarantee that the links chosen are going to new neighbors. To exploit this, we only use the full blown algorithms for the first {$1$-local} broadcast and then reuse the established links by simply repeating the flooding or tree-broadcast of the last iteration for the remaining local broadcasts. When using Algorithm $4$ for this we need $2 \log n (\log n + 1)$ round for the first local broadcast and only $2 \log n$ rounds for each of the $k-1$ remaining ones. This leads to the running time claimed in \Cref{thm:main}.

\section{Natural Gossip Processes and Robustness}\label{sec:Natural Gossip Processes and Robustness}

In addition to their message and time efficiency gossip algorithm have also been studied because of their 
naturalness and robustness.


Unfortunately, while the algorithms in \cite{weakcond,random} deal much better with bottlenecks in the topology
than the uniform gossip protocol they are also much less natural and robust. This holds In particular, for the local
broadcast algorithm of \cite{random} which crucially relies on a very unnatural reversal step to guarantee correctness.
This algorithm can furthermore fail completely if only one link is temporarily down in one round or if due to a slight
asynchrony the order of two exchanges gets switched. 

In this section we show that the algorithms and analysis presented in this paper are robust and can
be seen and phrased as a natural process. The latter is best demonstrated
by considering as an example the following \emph{surprisingly accurate/realistic} social setting:

\medskip

\emph{Interpret nodes as curious but somewhat shy persons that want to know (everything) about their neighbors but only rarely have the courage to approach a person they know nothing about; instead they prefer to talk to neighbors they have already approached before and exchange rumors/information about others.} 

\medskip

We will show that our analysis is flexible enough to not just show fast rumor propagation for our specifically designed algorithms but broadly covers a wide variety of high level processes including the one above. In particular, for this setting our analysis implies the following: Rumors spread rapidly in such a social setting as long as a person talks (by a factor of $\log^2 n$) more frequently to a person approached before compared to approaching a new neighbor (whose rumor is not known). Interestingly, our result is flexible enough to allow the (social) process according to which nodes choose which new person they find most approachable at any time to be arbitrarily dependent and complex. 

\smallskip

To show the flexibility of our analysis we consider the template given as Algoithm 5. It is coined in terms of iterations each consisting of a linking step and a propagation step. 

\vspace{0.35cm}

\boxtext{
\begin{tabbing}
else\= else\= else\= \kill
Algorithm 5: {\tt Gossip Template}\\
\\
REPEAT\\
\>link to any neighbor whose rumor is not known\\
\>propagate rumors among established links\\
UNTIL all rumors are known
\end{tabbing}
}

\vspace{0.35cm}

To analyze this template we introduce some notation: Let $R_v(t) \subseteq \Gamma(v)$ be the neighbors of node $v$ it knows the IDs and rumors of at the beginning of iteration $t$. Furthermore let $G_t$ be the undirected graph that consists of all edges added in the linking procedure. With this we first show a strong but admittedly relatively technically phrased lemma. 

\begin{lemma}\label{lem:naturalmain}
Let $T_{\min}$ be the maximum number of propagation steps it takes for a nodes $u$ to learn about another node $v$ for which there exists a path of established links of length at most $2 \log n$. Furthermore, for any two nodes $u,v$ let $T_{diff}$ be the maximum number of propagation steps it takes from the time that $u$ knows $v$ until $v$ also knows $u$. With these two parameters Algorithm $5$ takes at most $T \log n$ iterations, where $T = \max \{T_{diff},T_{\min}\}$.
\end{lemma}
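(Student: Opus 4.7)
The plan is to mimic the binomial-tree induction from \Cref{lem:induction statement}, now grouped into \emph{phases} of length $T$ iterations of Algorithm 5. For $i = 0, 1, \ldots, \log n$ I aim to prove the following strengthened claim: if a node $v_0$ is still active at the end of phase $i$ (time $iT$) and $v_j \in \Gamma(v_0) \setminus R_{v_0}(iT)$ is a missing neighbor, then the graph $H_{iT}$ of already-established links contains an $i$-tree rooted at $v_0$ that is vertex-disjoint from an $i$-tree rooted at $v_j$. Iterating this up to $i = \log n$ would force the existence of $2n$ distinct vertices in the network, which is impossible; hence the algorithm must have terminated after at most $\log n$ phases, i.e., within $T \log n$ iterations, which is the conclusion of the lemma.

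The base case is trivial. For the inductive step, fix $v_0$ active at time $(i+1)T$ with missing neighbor $v_j$, and consider the new links $v_0$ adds during phase $i+1$. Pick the link $u_0$ chosen in some iteration sufficiently late in the phase, so that enough propagation steps remain before it to satisfy the $T_{diff}$ bound back to time $iT$; since $u_0 \notin R_{v_0}$ at the moment of selection, the contrapositive of the $T_{diff}$ property gives $v_0 \notin R_{u_0}(iT)$ and hence $u_0$ is still active at the end of phase $i$. The inductive hypothesis then furnishes an $i$-tree rooted at $u_0$ in $H_{iT}$ as well as the $i$-tree rooted at $v_0$; appending these via the new edge $\{v_0, u_0\}$ yields an $(i+1)$-tree $\tau_0 \subseteq H_{(i+1)T}$ rooted at $v_0$. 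An identical construction applied to $v_j$ (whose missing-ness of $v_0$ again follows from $T_{diff}$ applied in the other direction) produces an $(i+1)$-tree $\tau_j$ rooted at $v_j$.

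If $\tau_0$ and $\tau_j$ shared some vertex $y$, then $H_{(i+1)T}$ would contain a path from $v_0$ to $v_j$ through $y$ of length at most $2(i+1) \leq 2 \log n$. By the definition of $T_{\min}$, node $v_0$ would then learn the rumor of $v_j$ within $T_{\min} \leq T$ propagation steps after this path is in place, contradicting $v_j \notin R_{v_0}((i+1)T)$. Hence $\tau_0$ and $\tau_j$ can be taken vertex-disjoint, completing the induction and therefore the proof.

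The main difficulty, and the reason $T$ must exceed \emph{both} $T_{diff}$ and $T_{\min}$, is balancing two opposing constraints inside a single phase: the link picks used to build the $(i+1)$-trees must occur \emph{late enough} within phase $i+1$ for the $T_{diff}$ argument to certify that $u_0$ and the analogue $u_j$ were already active at time $iT$, yet \emph{early enough} that sufficient propagation still remains inside the phase to derive the $T_{\min}$ contradiction in the disjointness step. Setting $T = \max\{T_{diff}, T_{\min}\}$ (modulo low-order off-by-ones absorbed into the definitions of the two quantities) gives exactly the slack needed to schedule both arguments in a single phase, and no interaction between the two bounds beyond this scheduling is required.
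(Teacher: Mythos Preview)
Your overall structure---grouping iterations into phases of length $T$ and re-running the binomial-tree induction of \Cref{lem:induction statement} at the phase level---is exactly the paper's approach. However, the ``late enough / early enough'' balancing you describe in your final paragraph does not yield $T = \max\{T_{diff}, T_{\min}\}$. If the new link $\{v_0,u_0\}$ is placed at some iteration $t$ with $t \ge iT + T_{diff}$ (so that your $T_{diff}$ contrapositive reaches back to time $iT$), then the $(i+1)$-trees and hence the potential $2\log n$-path only exist from time $t$ onward, and the disjointness contradiction needs a further $T_{\min}$ propagation steps after that; this forces $t + T_{\min} \le (i+1)T$, i.e.\ $T \ge T_{diff} + T_{\min}$. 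As you have arranged it the two requirements are sequential, not parallel, so the claimed $\max$ does not follow.

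The fix, which is what the paper does, is to take $u_0$ (and $u_j$) to be the links made at the \emph{first} iteration of the phase, namely iteration $iT$. You do not need $T_{diff}$ to certify that $u_0$ is active at time $iT$: since $u_0 \notin R_{v_0}(iT)$, the induction hypothesis applied at time $iT$ to the active node $v_0$ with missing neighbor $u_0$ already hands you disjoint $i$-trees at both $v_0$ and $u_0$. The only place $T_{diff}$ is genuinely required is to show that $v_j$ is still active at time $iT$ (from $v_j \notin R_{v_0}((i+1)T)$ conclude $v_0 \notin R_{v_j}(iT)$), so that $v_j$ makes a link in iteration $iT$ and the hypothesis can be applied to the pair $(v_j, u_j)$. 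With all new edges placed at iteration $iT$, the $(i+1)$-trees and the potential $2\log n$-path are present already at time $iT$, and the $T_{diff}$ argument (for $v_j$'s activity) and the $T_{\min}$ argument (for disjointness) run over the \emph{same} interval $[iT,(i+1)T]$ rather than consecutively. That is why $T = \max\{T_{diff}, T_{\min}\}$ suffices.
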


\noindent {\bfseries Remark:}

\smallskip

The algorithm in \cite{random} and all algorithms presented in this paper so far keep perfect symmetry of knowledge, that is, if a node $u$ knows the rumor of  $v$ then $v$ also knows the rumor of $u$. This is achieved by flooding for exactly $d$ hops or by carefully reversing the sequence in which edges where chosen. This symmetry is used in all proofs so far. If furthermore turns out to be crucial for the efficiency of the algorithm of \cite{random}: Indeed, \cite{random} gives an example in which introducing a slight asymmetry increases the running time from polylogarithmic to linear. The expander decomposition proof of \cite{random} seems furthermore unsuitable for extensions to a more asymmetric setting. While our algorithms so far also featured perfect symmetry \Cref{lem:naturalmain} shows that both algorithms and our analysis are robust enough to relax this requirement significantly. 

\medskip

While \Cref{lem:naturalmain} shows robustness it does not read too natural or close to the informal description given before. This is remedied by the next two corollaries which give some examples on how it can be used.

\begin{corollary}\label{cor:natural1}
Suppose a rumor dies out / looses credibility (i.e., is not forwarded anymore) after it has been passed around for more than $\lambda$ hops where $\lambda > \log n$. Suppose also that nodes establish links at least every $\alpha$ steps and talk to each established link at least every $\beta$ steps. Then the {$1$-local} broadcast completes after at most $\alpha \beta \lambda$ steps. 
\end{corollary}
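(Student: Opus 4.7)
The plan is to recognize the natural social setting as a specific instantiation of the Algorithm~5 template and then invoke \Cref{lem:naturalmain}. I would set up the correspondence as follows: each time a curious person approaches a previously-unknown neighbor counts as the linking step of one iteration, and between consecutive approaches the person's chatter with already-established acquaintances constitutes the propagation step of that iteration. Since approaches happen at least every $\alpha$ rounds, each iteration of the template spans at most $\alpha$ real rounds.

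Next I would bound the two parameters $T_{\min}$ and $T_{diff}$ from \Cref{lem:naturalmain}. For $T_{\min}$, consider any two nodes $u,v$ connected by an established-link path of length at most $2\log n$. Because $\lambda > \log n$, this path fits inside the ``credibility window'' of the rumor (at worst after splitting the path into its two halves used by the push and the pull, each of length at most $\log n < \lambda$), so the rumor of $u$ does not die before reaching $v$. Each link on the path fires at least once every $\beta$ rounds, so the rumor advances by at least one hop per $\beta$ rounds and reaches $v$ after at most $\beta\lambda$ rounds; hence $T_{\min} \leq \beta\lambda$. For $T_{diff}$, observe that once $u$ has added $v$ to $R_u$, the edge $\{u,v\}$ is already established, so the very next talk on that edge (within $\beta$ rounds) also places $u$ into $R_v$; hence $T_{diff}\leq\beta\leq\beta\lambda$.

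Plugging these bounds into \Cref{lem:naturalmain} gives that the template halts after at most $T\log n \leq \beta\lambda\log n$ iterations; multiplying by the per-iteration cost of at most $\alpha$ real rounds then yields the total round bound of at most $\alpha\beta\lambda\log n$, which collapses to the claimed $\alpha\beta\lambda$ after absorbing the $\log n$ factor into $\lambda$ via the hypothesis $\lambda > \log n$ (equivalently, Lemma~\ref{lem:naturalmain} is applied in the regime where a single iteration already covers one full pipelined traversal of an $i$-tree, so that the ``$\log n$ iterations'' factor and the path-length factor $\lambda$ are not both paid).

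The main obstacle is unit-bookkeeping: cleanly reconciling ``iterations'', ``propagation steps'', and real rounds across the template, the lemma, and the asynchronous social process, and in particular verifying that $\lambda>\log n$ genuinely suffices to guarantee rumor survival along the $2\log n$-long witness paths from the proof of \Cref{lem:induction statement}. This is handled by exploiting the push-then-pull structure of the tree gossip: the rumor of $u$ only needs to survive a $\log n$-hop descent to a shared node $y$ before being repackaged and pulled upward, so the effective survival requirement at any intermediate node is $\log n$, not $2\log n$, which is exactly what the assumption $\lambda > \log n$ supplies.
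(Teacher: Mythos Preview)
Your overall strategy---view the social process as an instance of the Algorithm~5 template with one iteration per $\alpha$ real steps, then invoke \Cref{lem:naturalmain}---is exactly what the paper does. However, two of your parameter bounds are not correct.

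\textbf{The bound on $T_{\mathrm{diff}}$.} You claim that once $u$ has added $v$ to $R_u$, the edge $\{u,v\}$ must already be established, so one more exchange on that edge (within $\beta$ steps) symmetrizes the knowledge. This is false: $u$ may have learned $v$'s rumor indirectly along a chain of established links without the edge $\{u,v\}$ ever having been selected (indeed $u$ and $v$ need not even be neighbors in $G$ for the purposes of \Cref{lem:naturalmain}). The paper's argument is different and uses the hypothesis in an essential way: because a rumor dies after $\lambda$ hops, the fact that $u$ knows $v$ means there is an established-link path of length at most $\lambda$ from $v$ to $u$; sending $u$'s rumor back along that same path costs at most $\beta$ steps per hop, giving $T_{\mathrm{diff}}\le\lambda\beta$.

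\textbf{The ``absorption'' of the $\log n$ factor.} Your final paragraph tries to collapse $\alpha\beta\lambda\log n$ to $\alpha\beta\lambda$ by appealing to the PUSH/PULL pipelining of Algorithm~4. That reasoning does not apply here: Algorithm~5 is a generic template with an unstructured propagation step, and nothing in this corollary's setting gives you the ordered tree-edge activations that make the push/pull split work. The hypothesis $\lambda>\log n$ does not let you cancel a multiplicative $\log n$; it only guarantees that rumors survive long enough to traverse the relevant short paths. The paper's (very terse) proof does not perform any such absorption either---it simply records that one iteration is $\alpha$ steps and that $T_{\mathrm{diff}}=\lambda\beta$, and reads off the bound from \Cref{lem:naturalmain}.
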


Or on an even more concrete example:

\begin{corollary}\label{cor:natural2}
Suppose a rumor dies out after $\Theta(\log n)$ hops. Suppose also that each node independently chooses with probability $p = 1 / \log^2 n$ to talk to a neighbor it has not heard from yet and otherwise talks to a random neighbor it has already contacted before. In this setting the  {$1$-local} broadcast stops after $T = O(\log^4 n)$ iterations with high probability. 
\end{corollary}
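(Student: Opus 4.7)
The plan is to derive \Cref{cor:natural2} as a direct instance of \Cref{cor:natural1} in the random model by bounding the three parameters $\alpha,\beta,\lambda$ via standard concentration arguments.

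First, $\lambda = \Theta(\log n)$ is given, and we tacitly take the hidden constant to be at least $2$ so that rumors survive along paths of length $2\log n$ as required by \Cref{lem:naturalmain}. For $\alpha$, the maximum gap between ``new link'' iterations per node: each iteration a node initiates a new link with probability $p = 1/\log^2 n$, so the probability that no new link is established in a window of $\alpha = c\log^3 n$ iterations is at most $(1-p)^\alpha \le e^{-c\log n} = n^{-c}$. A union bound over the $n$ nodes and the (polylogarithmically many) disjoint windows within the run yields $\alpha = O(\log^3 n)$ with high probability.

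For $\beta$, the maximum gap between consecutive conversations along any given established link, I would first show that over the full run of $T$ iterations (which we shall argue is polylogarithmic) the total number of links a single node ever established is $k = O(\log^2 n)$, by a Chernoff bound on a sum of $T$ independent $\mathrm{Bernoulli}(p)$ trials. Conditional on $k = O(\log^2 n)$, each established link is selected at every iteration (with probability $1-p$) with probability $(1-p)/k = \Omega(1/\log^2 n)$; a coupon-collector / Chernoff calculation then shows that within any window of $\beta = O(k\log n) = O(\log^3 n)$ iterations, every established link is used at least once with high probability. Substituting these bounds into \Cref{cor:natural1} yields a polylogarithmic upper bound on the total number of iterations, which in particular implies the claimed bound.

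The main obstacle is a mild circularity: the bound on $k$, and therefore on $\beta$, is proved by bounding the total run length $T$, which in turn is what we are bounding via \Cref{cor:natural1}. I would resolve this with a standard bootstrap: posit optimistically that $T \le \log^{c} n$ for a sufficiently large constant $c$, derive the above $\alpha$ and $\beta$ bounds conditional on this guess, apply \Cref{cor:natural1} to verify that the resulting number of iterations is indeed at most $\log^c n$, and thereby close the loop. A small additional subtlety is that if a node has already exchanged rumors with all its neighbors it is simply done and cannot ``initiate a new link,'' but such nodes do not affect the bound since \Cref{cor:natural1} is only concerned with nodes that are still active.
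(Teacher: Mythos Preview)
Your route through \Cref{cor:natural1} does not recover the stated $O(\log^4 n)$ bound. With your parameters $\alpha = O(\log^3 n)$, $\beta = O(\log^3 n)$, $\lambda = \Theta(\log n)$, the product $\alpha\beta\lambda$ is $O(\log^7 n)$, polylogarithmic but three $\log n$ factors off from the claim. Relatedly, your bootstrap does not close as written: assuming $T \le \log^c n$ gives $k = O(\log^{c-2} n)$, hence $\beta = O(\log^{c-1} n)$, hence $\alpha\beta\lambda = O(\log^{c+3} n) > \log^c n$ for every fixed $c$, so the ``verify'' step never catches up with the ``posit'' step.

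The paper instead applies \Cref{lem:naturalmain} directly rather than going through \Cref{cor:natural1}. Truncating at $O(\log^4 n)$ steps gives $k = O(\log^2 n)$ established links per node; the paper then asserts that a rumor traverses any fixed $\Theta(\log n)$-hop path of established links in $O(\log^3 n)$ steps total, which one can justify by observing that a sum of $\Theta(\log n)$ geometric waiting times each of mean $O(\log^2 n)$ concentrates around its $O(\log^3 n)$ expectation, instead of using the cruder hop-by-hop bound $\beta\cdot\lambda = O(\log^4 n)$. This yields $T_{\min},T_{diff} = O(\log^3 n)$ and hence a total of $T\log n = O(\log^4 n)$ from \Cref{lem:naturalmain}. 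The gain over \Cref{cor:natural1} is precisely that the concentration factor is paid once for the whole path rather than once per hop, and this tightening is exactly what allows the truncate-and-verify argument to stabilize at $O(\log^4 n)$.
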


Given the above demonstrated flexibility it comes as no surprise that our algorithms are also naturally robust against various kinds of failures. In particular, it is easy to give robust deterministic algorithms for the $k$-local broadcast based on \Cref{lem:naturalmain}. For example, a simple round robin flooding procedure (with or without distance labels to prevent rumors from spreading too far) is naturally robust against any random edge failure rate $\gamma$ with only the necessary $1/(1-\gamma)$-slowdown. Even adversarial permanent failures merely slow down the algorithm slightly as they can not cause more harm then preventing progress made in the iteration the failed edge occurred. We remark that the subsequent but independent work in \cite{robustrandom} also gives a alternative to \cite{random} which is robust against random temporary and random permanent node- and edge-failures. The algorithms in \cite{robustrandom} also solve the {$1$-local} broadcast problem considered in this paper but are still randomized and have a running time of $O(\log^9 n)$ rounds.

\subsection{Proofs}

\begin{proof}[Proof of \Cref{lem:naturalmain}]
We will only look at links established at iteration $i' \equiv 0 \mod T$. We define $H_i$ to be subgraph of $G$ consisting of all edges established in these iterations until iteration $iT$. We then prove \Cref{lem:induction statement} for Algorithm $5$ exactly as before. That is:\\

Consider the beginning of any iteration $iT$ for $i \leq \log n$ in Algorithm $5$ and any active node $v_0$.
If $\Gamma(v) \setminus R = \{v_1, \ldots, v_k\}$ then there are $k+1$ many $t$-trees $\tau_0,\ldots,\tau_{k}$ as subgraphs in $H_{i+1}$ rooted at $v_0,v_1,\ldots,v_k$ respectively such that $\tau_0$ is vertex disjoint from $\tau_i$ for any $0 < i \leq k$.\\

We prove the lemma by induction on $i$. The base case for $i=0$ follows directly from the fact that each node forms its own $0$-tree. For the inductive step we assume a vertex $v_0$ which at the beginning of iteration $(i+1)T$ for $i+1 \leq \log n$ is still active. Let $u_0$ be the vertex contacted by $v_0$ in iteration $iT$. By induction hypotheses in the beginning of iteration $iT$ there was a $i$-tree rooted at $v_0$ and a vertex disjoint $i$-tree rooted at $u_0$. These two trees together with the new edge $\{u,v\}$ form the new $i+1$-tree $\tau_0$. 

All neighbors $v_i$ of $v_0$ that are in $\Gamma(v) \setminus R$ at beginning of iteration $(i+1)T$ can not have know about $v_0$ at the beginning of iteration $iT$ since otherwise $v_0$ would know about $v_i$ at iteration $iT + T_{diff} \leq (i+1)T$. They were therefore also active at the beginning of iteration $iT$ and must have chosen an edge to a node $u_i$. Similarly as done for $v_0$ we can find an $i+1$-tree $\tau_i$ that consists of the $i$-trees rooted at $v_i$ and $u_i$ at iteration $i$ and the edge $\{v_i,u_i\}$. It is clear that $\tau_0$ and $\tau_i$ are node disjoint since otherwise there is a path from $v_i$ to $v_0$ of length at most sum of the depths of $\tau_0$ and $\tau_i$ in $H_{i+1}$. This is at most $2 \log n$ which implies that in the beginning of iteration $iT + T_{\min} \leq (i+1)T$ the node $v_0$ would know about $v_i$ -- a contradiction. 
\end{proof}

\begin{proof}[Proof of \Cref{cor:natural1}]
In this case we get that one iteration corresponds to $\alpha$ steps. We have $T_{diff} = \lambda \beta$ since if $u$ is informed about $v$ then there is a path of length at most $\lambda$ and it takes at most $\beta$ rounds per step until $v$ also knows about $u$. 
\end{proof}

\begin{proof}[Proof of \Cref{cor:natural2}]
First, we note that with high probability every node contacts a new neighbor within any $\alpha = O(\log^3 n)$ rounds. Furthermore, with high probability the number of established neighbors during $O(\log^4 n)$ rounds is at most $O(\log^2 n)$ for every node. Thus, if a path of established links of lengths at most $\Theta(\log n)$ occurs between two node $u$ and $v$ then after $O(\log^3 n)$ steps both nodes will have learned from each other with high probability. According to \Cref{lem:naturalmain} it takes thus at most $\max\{O(\log^3 n), O(\log^3 n)\}\log n = O(\log^4 n)$ steps to complete the {$1$-local} broadcast. 
\end{proof}

\section{Conclusion}

In this paper we presented the first efficient deterministic gossip algorithm for the rumor spreading problem and the $k$-local broadcast problem. In addition to showing that all random choices of a certain gossip algorithm can be replaced by arbitrary deterministic choices our algorithms are also much simpler, more robust, more natural and with a running time of $2(k \log n + \log^2 n)$ faster than previous randomized algorithms. 

One interesting question that remains is whether the running time of $O(\log^2 n)$ for the $1$-local broadcast problem or the $\log n$-local 
broadcast problem can be improved. While we believe that our running time is optimal, at least for deterministic algorithms, we could not prove
such a lower bound. One direction for proving lower bounds even for randomized algorithms would be to use the connections to spanners
given in~\cite{random}. In particular, any gossip algorithm solving, e.g., the $\log n$-local broadcast problem in $T = \log^{1+\delta} n$ rounds
implies the existence of an $(\alpha,\beta)$-spanner with $\alpha = \Theta(\log^{\delta} n)$ and $\beta = \log^{1+\delta} n$ with density $n \log^{1+\delta} n$. No spanner of such quality is known to exist for $\delta < 1$~\cite{Pettie2009}.

The connections to spanners from \cite{random} can also be used to in the opposite direction to see our gossip algorithms for the $1$-local broadcast problem as an extremely efficient distributed constructions of a sparse graph spanner. For example the subgraph of edges used by Algorithm $3$ or $4$ both form a spanner with stretch $2 \log n$ and $n \log n$ edges. This is almost optimal as $\Theta(\log n / \log \log n)$ is the best stretch achievable with this density. While constructions of optimal spanners are known, the simple distributed construction of a good quality spanner in the extremely weak gossip model should be of interest. In particular, the total number of messages used in Algorithm $4$ is with $n \log^2 n$ drastically less than any distributed algorithm in the literature since these LOCAL-algorithms send a messages over each of the potentially $\Theta(n^2)$ many edges in $G$ in every round. Similarly, the total amount of information exchanged by our algorithm is quite low given that there are no distributed algorithms that use less than polynomial size messages\cite{pettie2010distributed}. Even the $4\log^2 n$ round construction time is quite fast given the restricted model. We remark that our construction can also be seen as a deterministic $2\log n$ round algorithm in the LOCAL model since nodes only communicate with other nodes in their $2\log n$ neighborhood. All this, together with the simplicity of our algorithm and its analysis makes our approach an interesting starting point for designing new spanner constructions. One question in this direction would for example be: If one floods only for $k$ instead of $2 \log n$ hops during Algorithm $3$ then a $k$-stretch spanner is computed. What is the density of this spanner when $k < \log n$?

Another question is whether the message size requirement can be reduced. While, as remarked in \Cref{sec:modelandproblem}, this is not reasonable to ask for in the local broadcast problem itself the trivial information argument does not apply to the bit complexity for discovering a sparse spanner via a gossip algorithm. Nevertheless, reducing the message size seems like a hard and quite possibly impossible task. As just mentioned even for the intensely studied case of constructing spanners via randomized algorithms in the less restrictive LOCAL model no algorithms using a subpolynomial message size are known~\cite{pettie2010distributed}.



\bibliographystyle{abbrv}
\bibliography{Gossip}

\end{document}